\algnewcommand{\LineComment}[1]{\State \(\triangleright\) #1}
\DeclareFontFamily{OMX}{yhex}{}
\DeclareFontShape{OMX}{yhex}{m}{n}{<->yhcmex10}{}
\DeclareSymbolFont{yhlargesymbols}{OMX}{yhex}{m}{n}
\DeclareMathAccent{\wideparen}{\mathord}{yhlargesymbols}{"F3}
\newtheorem{theorem}{\bf Theorem}
\let\oldtheorem\theorem
\renewcommand{\theorem}{\oldtheorem\normalfont}
\newtheorem{proposition}{\bf Proposition}
\let\oldproposition\proposition
\renewcommand{\proposition}{\oldproposition\normalfont}
\newtheorem{lemma}{\bf Lemma}
\let\oldlemma\lemma
\renewcommand{\lemma}{\oldlemma\normalfont}
\newtheorem{example}{Example}
\let\oldexample\example
\renewcommand{\example}{\oldexample\normalfont}
\newtheorem{definition}{\bf Definition}
\let\olddefinition\definition
\renewcommand{\definition}{\olddefinition\normalfont}
\newtheorem{remark}{Remark}
\let\oldremark\remark
\renewcommand{\remark}{\oldremark\normalfont}
\newcommand\semihuge{\fontsize{21}{27}\selectfont}
\DeclareMathOperator*{\argmin}{argmin}
\begin{document}
\title{\semihuge  Caching Meets Millimeter Wave Communications for Enhanced Mobility Management in 5G Networks}\vspace{0em}
\author{
	\authorblockN{Omid Semiari$^{\dag}$, Walid Saad$^{\dag}$, Mehdi Bennis$^\ddag$, and Behrouz Maham$^*$}\\\vspace*{0em}
	\authorblockA{\small $^{\dag}$Wireless@VT, Bradley Department of Electrical and Computer Engineering, Virginia Tech, Blacksburg, USA, Emails: \protect\url{{osemiari,walids}@vt.edu}\\
		\small $^\ddag$ Centre for Wireless Communications, University of Oulu, Finland, Email: \url{bennis@ee.oulu.fi}\\
		$^*$Department of Electrical and Electronic Engineering,
		Nazarbayev University, Astana, Kazakhstan, Email: \url{behrouz.maham@nu.edu.kz}
	}\vspace*{-2em}
}
%
\maketitle
\begin{abstract}

	One of the most promising approaches to overcome the uncertainty and dynamic channel variations of millimeter wave (mmW) communications is to deploy dual-mode base stations that integrate both mmW and microwave ($\mu$W) frequencies. If properly designed, such dual-mode base stations can enhance mobility and handover in highly mobile wireless environments. In this paper, a novel approach for analyzing and managing mobility in joint $\mu$W-mmW networks is proposed. The proposed approach leverages device-level caching along with the capabilities of dual-mode small base stations (SBSs) to minimize handover failures, reduce inter-frequency measurement energy consumption, and provide seamless mobility in emerging dense heterogeneous networks. First, fundamental results on the caching capabilities, including caching probability and cache duration are derived for the proposed dual-mode network scenario. Second, the average achievable rate of caching is derived for mobile users. Moreover, the impact of caching on the number of handovers (HOs), energy consumption, and the average handover failure (HOF) is analyzed. Then, the proposed cache-enabled mobility management problem is formulated as a \emph{dynamic matching game} between mobile user equipments (MUEs) and SBSs. The goal of this game is to find a distributed handover mechanism 
	that, under network constraints on HOFs and limited cache sizes, allows each MUE to choose between: a) executing an HO to a target SBS, b) being connected to the macrocell base station (MBS), or c) perform a transparent HO by using the cached content.
	The formulated matching game inherently captures the dynamics of the mobility management problem caused by HOFs. To solve this dynamic matching problem, a novel algorithm is proposed and its convergence to a two-sided dynamically stable HO policy for MUEs and target SBSs is proved. Numerical results corroborate the analytical derivations and show that the proposed solution will provides significant reductions in both the HOF and energy consumption of MUEs, resulting in an enhanced mobility management for heterogeneous wireless networks with mmW capabilities.
\end{abstract}
\section{Introduction} \label{intro}\vspace{-0cm}
The proliferation of bandwidth-intensive wireless applications such as social networking, high definition video streaming, and mobile TV have drastically strained the capacity of wireless cellular networks. To cope with this traffic increase, several new technologies are anticipated for 5G cellular systems: 1) dense deployment of small cell base stations (SBSs), 2) exploitation of the large amount of available bandwidth at \emph{millimeter wave (mmW)} frequencies, and 3) enabling of \emph{content caching} directly at the user equipments (UEs) to reduce delay and improve quality-of-service (QoS). The dense deployment of SBSs with reduced cell-sizes will boost the capacity of wireless networks by decreasing UE-SBS distance, removing coverage holes, and improving spectral efficiency. Meanwhile, mmW communications will provide high data rates by leveraging directional antennas and transmitting over a large bandwidth that can reach up to $5$ GHz. In addition, exploiting the high storage capacity of the modern smart handhelds to cache the data at the UE increases the flexibility and robustness of resource management, in particular, for \emph{mobile UEs (MUEs)}. In fact, caching allows the network to store the data content in advance, while enabling MUEs to use the cached content when sufficient wireless resources are not available.  

However, dense heterogeneous networks (HetNets), composed of macrocell base stations (MBSs) and SBSs with various cell sizes, will introduce three practical challenges for mobility management. First, MUEs will experience frequent handovers (HOs), while passing SBSs with relatively small cell sizes, which naturally increases the overhead and delay in HetNets. Such frequent HOs will also increase handover failure (HOF), particularly for MUEs that are moving at high speeds \cite{6384454}. In fact, due to the small and disparate cell sizes in HetNets, MUEs will not be able to successfully finish the HO process by the time they trigger HO and pass a target SBS. Second, the inter-frequency measurements that are needed to  discover target SBSs can be excessively power consuming and detrimental for the battery life of MUEs, especially in dense HetNets with frequent HOs. Third, microwave ($\mu$W) frequencies are stringently congested, and thus, frequent HOs may introduce unacceptable overhead and limit the available frequency resources for the static users. In this regard, offloading MUEs from heavily utilized $\mu$W frequencies to mmW frequencies can substantially improve the spectral efficiency at the $\mu$W network.

To address these challenges and enhance mobility management in  HetNets, an extensive body of work has appeared in the literature \cite{1325888,7010527, 6515049,6563279,6587998,7247509,Khan1,7562411,7565107,7354528,7118239,icc17,6736753,6603647,6620380,Gomes2016}. In \cite{1325888}, the authors provide a comprehensive overview on mobility management in IP networks. The authors in \cite{7010527} present different distributed mobility management protocols at the transport layer for future dense HetNets. In \cite{6515049}, an energy-efficient SBS discovery method is proposed for HetNets. The work in \cite{6563279} investigates HO decision algorithms that focus on improving HO between femtocells and LTE-Advanced systems. The work presented in \cite{6587998} overviews existing approaches for vertical handover decisions in HetNets. In \cite{7247509}, the authors study the impact of channel fading on mobility management in HetNets. In addition, the work in \cite{7247509} shows that increasing the sampling period for HO decision decreases the fading impact, while increasing the ping-pong effect. In \cite{Khan1}, the authors propose an HO scheme that takes into account the speed of MUEs to decrease frequent HOs in HetNets. The authors in \cite{7562411} propose an HO scheme that supports soft HO by allowing MUEs to connect with both a macrocell base station (MBS) and SBSs. Furthermore, a distributed mobility management framework is proposed in \cite{7565107} which uses multiple frequency bands to decouple the data and control planes. 

Although interesting, the body of work in \cite{1325888,7010527,6515049,6563279,6587998,7247509,Khan1,7562411,7565107} does not consider mmW communications and caching capabilities for mobility management and solely focuses on HetNets operating over $\mu$W frequencies. In addition, it does not study the opportunities that caching techniques can provide for mobility management. In \cite{7354528}, an HO scheme for mmW networks is proposed in which the MBS acts as an anchor for mmW SBSs to manage control signals. However, \cite{7354528} assumes that line-of-sight (LoS) mmW links are always available and provides no analytical results to capture the directional nature of mmW communications. In \cite{7118239}, the authors propose a resource allocation scheme for hybrid mmW-$\mu$W networks that enhances video streaming by buffering content over mmW links. However, \cite{7118239} does not address any mobility management challenge, such as frequent HOs or HOF. Our early work in \cite{icc17} provided some of the basic insights on mobility management in $\mu$W-mmW networks. However, in contrast to this work, \cite{icc17} solely focuses on an average performance analysis, does not consider dynamic HO problem for multi-MUE scenarios, and does not propose any energy management mechanisms for handling inter-frequency measurements.

Proactive caching for enhancing mobility management has been motivated by the works in \cite{6736753,6603647,6620380,Gomes2016}. In \cite{6736753}, the authors discuss the potential of content caching at either evolved packet core network or radio access network to minimize the traffic overhead at the core network. Moreover, the authors in \cite{6603647} propose a proactive caching framework in which an ongoing IP service can be cached in advance and continuously transferred among different data centers as MUEs move across different cells. In \cite{6620380}, the authors propose a caching framework that stores different parts of a content at different base stations, allowing MUEs to randomly move across different cells and download different cached parts of the original content whenever possible. In addition, in \cite{Gomes2016}, a proactive caching solution is proposed for mobility management by exploiting MUEs' trajectory information. Although interesting, the body of work in \cite{6736753,6603647,6620380,Gomes2016} focuses on adopting protocols that are designed for higher network layers. Moreover, these solutions do not consider caching directly at the MUEs and focus on mobility management at the core network. However, we will show how leveraging high capacity mmW communication complements the notion of caching at MUEs. In addition, caching at MUEs will provide opportunities to perform \emph{transparent} HOs in HetNets, without requiring any data session with a target SBS.

The main contribution of this paper is a novel mobility management framework that addresses critical handover issues, including frequent HOs, HOF, and excessive energy consumption for seamless HO in emerging dense wireless cellular networks with mmW capabilities. In fact, we propose a model that allows MUEs to cache their requested content by exploiting high capacity mmW connectivity whenever available. As such, the MUEs will use the cached content and avoid performing any HO, while passing SBSs with relatively small cell sizes. First, we propose a geometric model to derive tractable, closed-form expressions for key performance metrics, including the probability of caching, cumulative distribution function of caching duration, and the average data rate for caching at an MUE over a mmW link. Moreover, we provide insight on the achievable gains for reducing the number of HOs and the average HOF, by leveraging caching in mmW-$\mu$W networks. Then, we formulate the proposed cache-enabled mobility management framework as a dynamic matching game, so as to provide a distributed solution for mobility management in HetNets, while taking the dynamics of the system into account. To solve the formulated dynamic matching problem, we first show that conventional algorithms such as the deferred acceptance algorithm adopted in \cite{Gale} and \cite{eduard11}, fail to guarantee a dynamically stable HO between MUEs and SBSs. Therefore, we propose a novel distributed algorithm that is guaranteed to converge to a dynamically stable HO policy in dense HetNets. Subsequently, the complexity of the proposed algorithm in terms of signaling overhead is analyzed. Under practical settings, we show that the proposed cache-enabled HO framework can decrease the average HOF rate by up to $45 \%$, even for MUEs with high speeds. In addition, simulation results provide insights on the achievable gains by the proposed distributed algorithm, in terms of reducing energy consumption for cell search, as well as increasing traffic offloads from the $\mu$W frequencies.  


The rest of this paper is organized as follows. Section II presents the system model. Section III presents the analysis for caching in mobility management. Performance analysis of the cache-enabled mobility management is provided in Section IV. Section V formulates the mobility management as a dynamic matching and presents the proposed algorithm. Simulation results are presented in Section VI and conclusions are drawn in Section VII.

\begin{table}[!t]
	\footnotesize
	\centering
	\caption{Variables and notations}\vspace*{-0em}
	\begin{tabular}{|c|c||c|c|}
		\hline
		\bf{Notation} & \bf{Description} & \bf{Notation} & \bf{Description} \\
		\hline
		$K$ & Number of SBSs & $\mathcal{K}$ & Set of SBSs\\
		\hline
		$U$ & Number of MUEs & $\mathcal{U}$ & Set of MUEs\\
		\hline
		$\theta_u$ & Moving angle of MUEs & $v_u$ & Speed of MUEs\\
		\hline
		$p_k$ & Transmit power of SBS $k$ & $B$ & Segment size of video (bits)\\
		\hline
		$\Omega_u$ & Cache size of MUE $u$ & $\Omega_u^{\text{max}}$ & Maximum cache size\\
		\hline
		$t_u^c$ & Caching duration of MUE $u$ & $Q$ & Video play rate\\
		\hline
		$\bar{R}^c(u,k)$ & Average achievable caching rate & $d^c$ & Traversed distance using cached content\\
		\hline
		$\Delta T$ & Time-to-trigger (TTT) & $r^c$ & Traversed distance in caching duration\\
			\hline
		$T_s$ & Inter-frequency cell scanning interval & $t_{\text{MTS}}$ & Minimum time-of-stay (ToS)\\
		\hline
		$\theta_k$ & Beamwidth for SBS $k$ & $E^s$ & Consumed energy per cell search \\
			\hline
		$t_{u,k}$ & Time-of-stay for MUE $u$ at SBS $k$ & $t_{\text{MTS}}$ & Minimum required time-of-stay \\
		\hline
	\end{tabular}\label{tab1}
\end{table}
\section{System Model}
Consider a HetNet composed of an MBS and $K$ SBSs within a set $\mathcal{K}$ distributed uniformly across an area. Each SBS $k \in \mathcal{K}$ can be viewed as a picocell or a femtocell, depending on its transmit power $p_k$. Picocells are typically deployed in outdoor venues while femtocells are relatively low-power and suitable for indoor deployments. The SBSs operate at $\mu$W frequencies that are different than those used by the MBS and, thus, there is no interference between SBSs and the MBS  \cite{power,6515049}. The SBSs are also equipped with mmW front-ends to serve MUEs over either mmW or $\mu$W frequency bands \cite{7503786}. The dual-mode capability allows to integrate mmW and $\mu$W radio access technologies (RATs) at the medium access control (MAC) layer of the air interface and reduce the delay and overhead for fast vertical handovers between both RATs \cite{7503786}. Within this network, we consider a set $\mathcal{U}$ of $U$ MUEs that are distributed randomly and that move across the considered geographical area during a time frame $T$. Each user $u \in \mathcal{U}$ moves in a  random direction $\theta_u \in \left[0,2\pi \right]$, with respect to the $\theta=0$ horizontal angle, which is assumed fixed for each MUE over a considered time frame $T$. In addition, we consider that an MUE $u$ moves with an  average speed $v_u \in \left[v_{\text{min}},v_{\text{max}}\right]$. The MUEs can receive their requested traffic over either the mmW or the $\mu$W band.



\vspace{-.1cm}
\subsection{Channel model}
The large-scale channel effect over mmW frequencies for a link between an SBS $k$ and an MUE $u \in \mathcal{U}$, in dB, is given by\footnote{The free space path loss model in \eqref{pathloss_mmw} has been adopted in many existing works, such as in \cite{Ghosh14}, that carry out real-world measurements to characterize mmW large scale channel effects.}:
\begin{align}\label{pathloss_mmw}
L(u,k)=20\log_{10}\left(\frac{4\pi r_0}{\lambda}\right)
\!+\! 10\alpha \log_{10}\left(\frac{r_{u,k}}{r_0}\right)\!+\!\chi, 
\end{align} 
where \eqref{pathloss_mmw} holds for $r_{u,k}\geq r_{\text{ref}}$, with $r_{\text{ref}}$ and $r_{u,k}$  denoting, respectively, the reference distance and distance between the MUE $u$ and SBS $k$. In addition, $\alpha$ is the path loss exponent, $\lambda$ is the wavelength at carrier frequency $f_c = 73$ GHz over the E-band, due to the low oxygen absorption, and $\chi$ is a Gaussian random variable with zero mean and variance $\xi^2$. The path loss parameters $\alpha$ and $\xi$ will have different values, depending on whether the mmW link is line-of-sight (LoS) or non-LoS (NLoS). Over the $\mu$W frequency band, the path loss model follows \eqref{pathloss_mmw}, however, with parameters that are specific to sub-6 GHz frequencies. 
\begin{figure}[!t]
	\centering
	\centerline{\includegraphics[width=10cm]{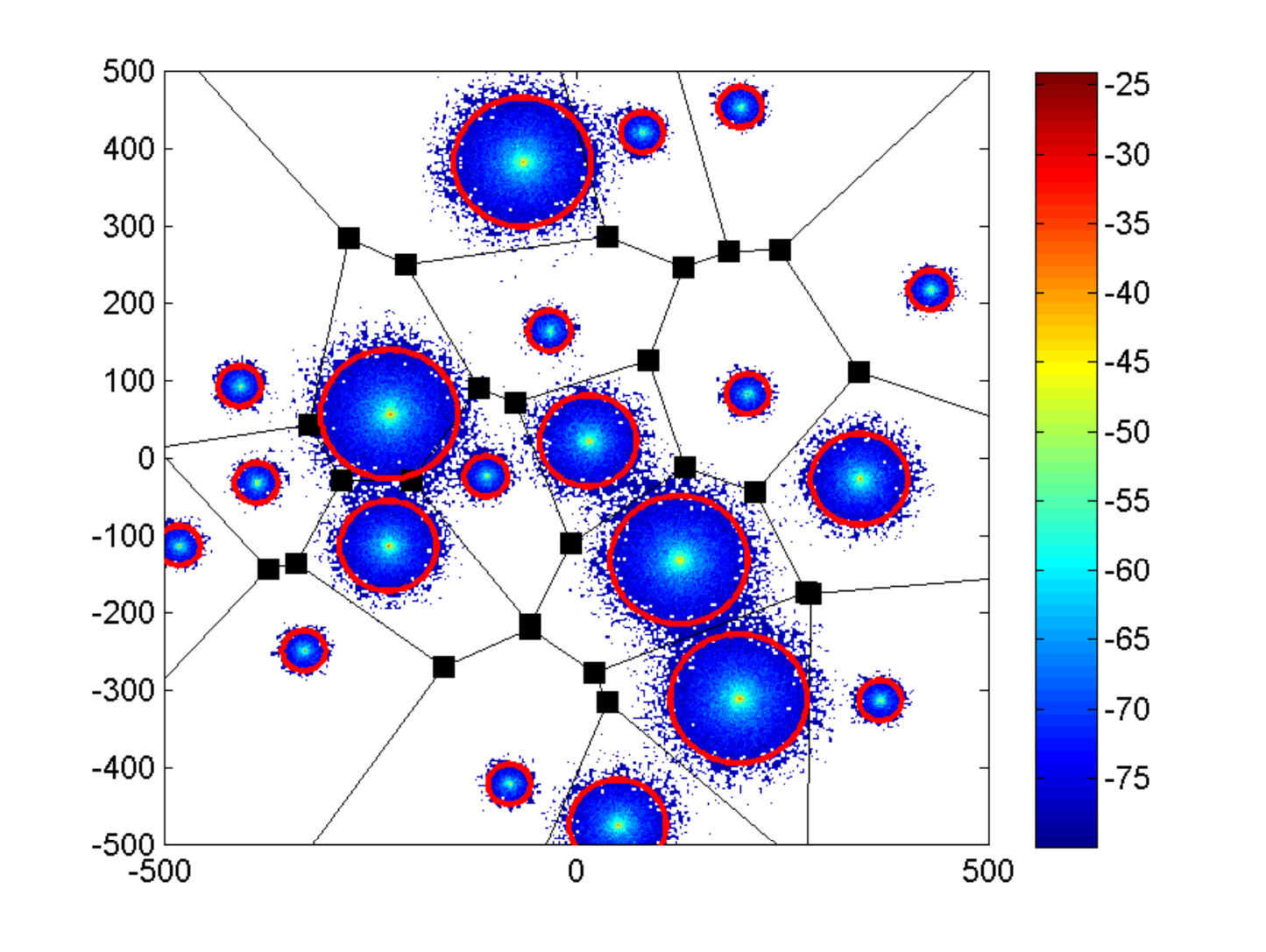}}\vspace{-.5cm}
	\caption{\small SBSs coverage with RSS threshold of $-80$ dB. Red circles show the simplified cell boundaries.}\vspace{-0.2cm}
	\label{Model1}
\end{figure}\vspace{0em}

An illustration of the considered HetNet is shown in Fig. \ref{Model1}. The coverage for each SBS at the $\mu$W frequency is shown based on the maximum received signal strength (max-RSS) criteria with a threshold of $-80$ dB. White spaces in Fig. \ref{Model1} delineate the areas that are covered solely by the MBS. Here, we observe that shadowing effect can adversely increase the ping-pong effect for MUEs. To cope with this issue, the 3GPP standard suggests L1/L3 filtering which basically applies averaging to RSS samples, as explained in \cite{7247509}.  
\subsection{Antenna model and configuration}
To overcome the excessive path loss at the mmW frequency band, the MUEs will be equipped with electronically steerable antennas which allow them to achieve beamforming gains at a desired direction. The antenna gain pattern for MUEs follows the simple and widely-adopted sectorized pattern which is given by \cite{7110547}:
\begin{align}\label{gainMUE}
G(\theta)=\begin{cases}
G_{\text{max}}, &\text{if} \,\,\,\,\,\theta <|\theta_m|,\\
G_{\text{min}}, &\text{otherwise},
\end{cases}
\end{align}
where $\theta$ and $\theta_m$ denote, respectively, the azimuth angle and the antennas' main lobe beamwidth. $G_{\text{max}}$ and $G_{\text{min}}$ denote, respectively, the antenna gain of the main lobe and side lobes. For SBSs, we use a model similar to the  sectorized pattern in \eqref{gainMUE},
\begin{figure}[!t]
	\centering
	\centerline{\includegraphics[width=6cm]{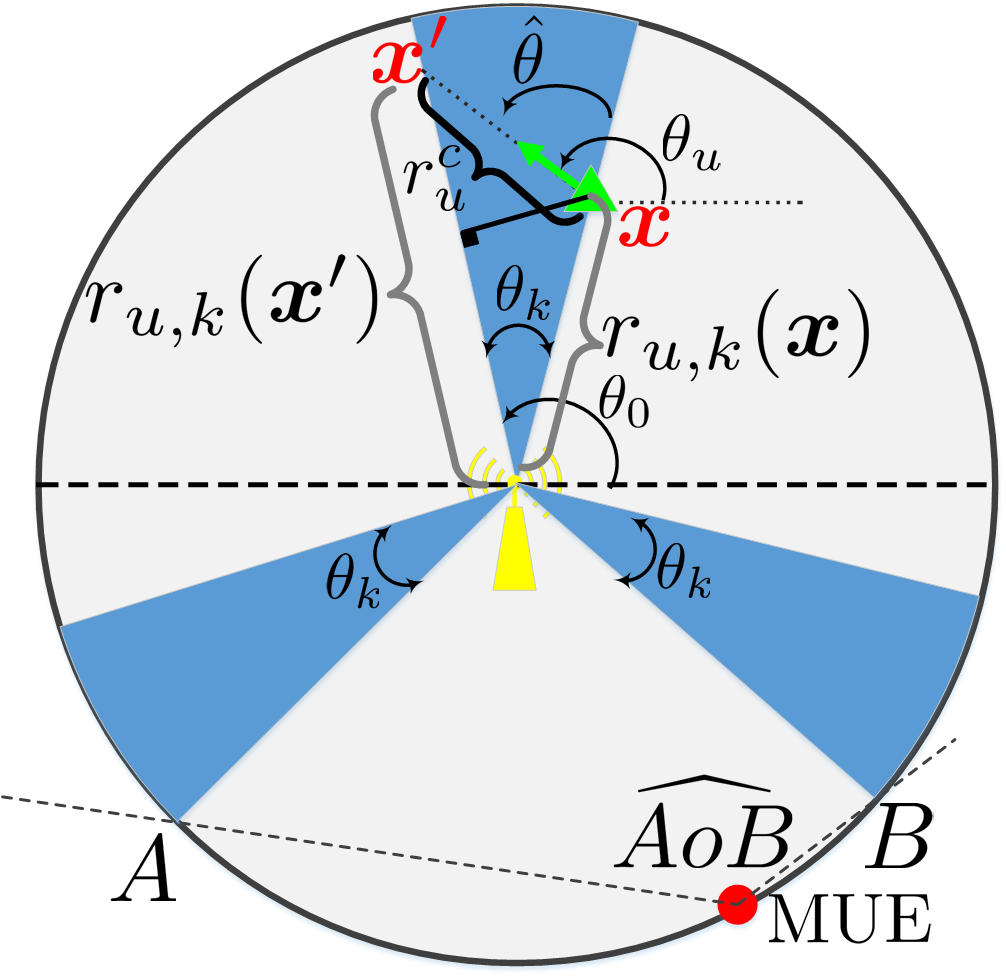}}\vspace{-.2cm}
	\caption{\small Antenna beam configuration of a dual-mode SBS with $N_k=3$. Shaded areas show the mmW beams.}\vspace{0em}
	\label{model2}
\end{figure}
however, we allow each SBS $k$ to form $N_k$ beams, either by using $N_k$ antenna arrays or forming multi-beam beamforming. The beam patten configuration of an SBS $k$ is shown in Fig. \ref{model2}, where $N_k=3$ equidistant beams in $\theta \in \left[0, 2\pi \right]$ are formed. To avoid the complexity and overhead of beam-tracking for mobile users, the direction of the SBSs' beams in azimuth is fixed. In fact, an MUE can connect to an SBS $k$ over a mmW link, if the MUE traverses the area covered by the $k$'s mmW beams.  It is assumed that for a desired link between an SBS $k$ and an MUE $u$, the overall transmit-receive gain is $\psi_{u,k}=G_{\text{max}}^2$. 
\vspace{-1em}
\subsection{Traffic model}\vspace{-.1cm}
Video streaming is one of the wireless services with most stringent quality-of-service (QoS) requirement. Meeting the QoS demands of such services is prone to the delay caused by frequent handovers in HetNets. In addition, HOFs can significantly degrade the performance by making frequent service interruptions. Therefore, our goal is to enhance mobility management for MUEs that request video or streaming traffic. Each video content is partitioned into small segments, each of size $B$ bits. The network incorporates caching to transmit incoming video segments to an MUE, whenever a high capacity mmW connection is available. In fact, high capacity mmW connection, if available, allows to cache a large portion or even the entire video in a very short period of time. We define the cache size of $\Omega_u(k)$ for an arbitrary MUE $u$, associated with an SBS $k$, as the number of video segments that can be cached at MUE $u$ as follows:
\begin{align}\label{traffic1}
\Omega_u(k) = \min\Big\lbrace\bigg\lfloor \frac{\bar{R}^c(u,k)t_u^c}{B}\bigg\rfloor, \Omega_u^{\text{max}}\Big\rbrace,
\end{align}
where $\lfloor . \rfloor$ and $\min\lbrace . ,. \rbrace$ denote, respectively, the floor and minimum operands and $\Omega_u^{\text{max}}$ is the maximum cache size. In addition, $t_u^c$ is the \emph{caching duration} which is equal to the time needed for an MUE $u$ to traverse the mmW beam of its serving SBS. Considering the small green triangle in Fig. \ref{model2} as the current location of an MUE crossing a mmW beam, the caching duration will be $t_u^c=r_u^c/v_u$ where $r_{u}^c$ is the distance traversed across the mmW beam. Moreover, $\bar{R}^c(u,k)$ is the \emph{average achievable rate} for the MUE $u$ during $t_u^c$. Given $\Omega_u(k)$ and the  video play rate of $Q$, specified for each video content, the distance an MUE $u$ can traverse with speed $v_u$, while playing the cached video content will be 
\begin{align}\label{traffic2}
d^c(u,k) = \frac{\Omega_u(k)}{Q}v_u.
\end{align}
In fact, the MUE can traverse a distance $d^c(u,k)$ by using the cached video content after leaving its serving cell $k$, without requiring an HO to any of the target cells. Meanwhile, the location information and control signals, such as paging, can be handled by the MBS during this time. As we discuss in details, such caching mechanism will help MUEs to avoid redundant cell search and HOs, resulting in an efficient mobility management in dense HetNets.
\vspace{-1em}
\subsection{Handover procedure and performance metrics}
The HO process in the 3GPP standard proceeds as follows: 1) Each MUE will do a cell search every $T_s$ seconds, which can be configured by the network or directly by the MUEs, 2) If any target cell offers an RSS plus a hysteresis that is higher than the serving cell, even after L1/L3 filtering of input RSS samples, the MUE will wait for a time-to-trigger (TTT) of $\Delta T$ seconds to measure the average RSS from the target cell, 
3) If the average RSS is higher than that of the serving SBS during TTT, the MUE triggers HO and sends the measurement report to its serving cell. 
The averaging over the TTT duration will reduce the ping-pong effect resulting from instantaneous CSI variations, and 4) HO will be executed after the serving SBS sends the HO information to the target SBS.

In our model, we modify the above HO procedure to leverage the caching capabilities of MUEs during mobility. Here, we let each MUE $u$ dynamically determine $T_s$, depending on the cache size $\Omega_u$, the video play rate $Q$, and the MUE's speed $v_u$. That is, an MUE $u$ is capable of muting the cell search while $\Omega_u/Q$ is greater than $\Delta T$, which enables it to have $\Delta T$ seconds to search for a target SBS before the cached content runs out. 

Next, we consider the HOF as one of the key performance metrics for any HO procedure. One of the main reasons for the potential increase in HOF in HetNets is due to the relatively small cell sizes, compared to MBS coverage. In fact, HOF is typical if the time-of-stay (ToS) for an MUE is less than the minimum ToS (MTS) required for performing a successful HO. That is,
\begin{align}\label{HOF1}
\gamma_{\text{HOF}}(u,k) = \begin{cases}
1, \,\,\,\,\,\, \text{if}\,\,\, t_{u,k}<t_{\text{MTS}},\\
0, \,\,\,\,\,\, \text{otherwise},
\end{cases}
\end{align}
where $t_{u,k}$ is the ToS for MUE $u$ to pass across SBS $k$ coverage. Although a short ToS may not be the only cause for HOFs, it becomes very critical within an ultra dense small cell network that encompasses MUEs moving at high speeds \cite{3gpp}.

To search the $\mu$W carrier for synchronization signals and decode the broadcast channel (system information)
of the detected SBSs, the MUEs have to spend an  energy $E^s$ per each cell search \cite{6515049}. Hence, the total energy consumed by an MUE for cell search during time $T$  will be
\begin{align}\label{P1}
E_{\text{total}}^s = E^s  \frac{T}{T_s}.
\end{align}
Note that increasing $T_s$ reduces $E_{\text{total}}^s$ which is desirable. However, less frequent scans will be equivalent to less HOs to SBSs. Therefore, there is a tradeoff between reducing the consumed power for cell search and maximizing traffic offloads from the MBS to SBSs. Content caching will allow increasing $T_s$, while maintaining traffic offloads from the MBS.

Next, we propose a geometric framework to analyze the caching opportunities, in terms of the caching duration $t^c$, and the average achievable rate $\bar{R^c}$, for MUEs moving at random directions in joint mmW-$\mu$W HetNets. \vspace{-.1cm}
\section{Analysis of Mobility Management with Caching Capabilities}
In this section, we first investigate the probability of serving an arbitrary MUE over mmW frequencies by a dual-mode SBS. \vspace{-1em}
\subsection{Probability of mmW coverage}
In Fig. \ref{model2}, the small circle represents the intersection of an MUE $u$'s trajectory with the coverage area of an SBS $k$. In this regard, $\mathbbm{P}^c_{k}(N_k,\theta_k)$ represents the probability that MUE $u$ with a random direction $\theta_u$ and speed $v_u$ crosses the mmW coverage areas of SBS $k$. From Fig. \ref{model2}, we observe that the MUE will pass through the area within mmW coverage only if the MUE's direction is inside the angle $\widehat{AoB}$. Hence, we can state the following.

\begin{theorem}\label{prop1}
	If an SBS $k$ has formed a mmW beam pattern with $N_k\geq 2$ main lobes, each with a beamwidth $\theta_k>0$, the probability of content caching will be given by:
	\begin{align}\label{prop1eq}
	\mathbbm{P}^c_{k}(N_k,\theta_k) =\left[\frac{N_k \theta_k}{2\pi}\right] \!+\! \left[1-\frac{N_k \theta_k}{2\pi}\right]\left[\frac{1}{2}\left(1-\frac{1}{N_k}\right)+\frac{\theta_k}{4\pi}\right].	
	\end{align}
\end{theorem}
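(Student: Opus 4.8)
The plan is to compute $\mathbbm{P}^c_k$ by conditioning on where the MUE's straight-line trajectory enters the circular coverage region of SBS $k$, and then applying the law of total probability. I model the coverage of SBS $k$ as a disk of radius $R$ centered at the SBS, partitioned into $N_k$ identical ``beam'' sectors, each a pie-slice of central angle $\theta_k$ placed at the equidistant orientations $2\pi j/N_k$, together with $N_k$ complementary ``gap'' sectors, each of central angle $\phi_g=\tfrac{2\pi}{N_k}-\theta_k$. Since the MUE travels along a chord, it is natural to take its entry point to be uniformly distributed on the boundary circle. The boundary arc subtended by the beam sectors has total length proportional to $N_k\theta_k$, so the MUE enters directly through a beam with probability $\tfrac{N_k\theta_k}{2\pi}$ and through a gap with the complementary probability $1-\tfrac{N_k\theta_k}{2\pi}$. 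If it enters through a beam it is already under mmW coverage, so caching occurs with probability one; this produces the first term of \eqref{prop1eq} and the weight $1-\tfrac{N_k\theta_k}{2\pi}$ on the second.

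The core of the argument is the conditional caching probability given entry through a gap. First I would fix an entry point $P$ on a gap arc, at boundary angle $\gamma$ measured from the gap center, and identify the set of travel directions $\theta_u$ for which the chord from $P$ crosses a beam sector. Because gap and beam sectors alternate and all share the apex $o$ at the SBS, the MUE reaches a beam if and only if its direction points inward enough to cross a radial edge shared with one of the two \emph{adjacent} beams; the far beams are irrelevant, since reaching them already requires first crossing an adjacent one. Geometrically, the favorable directions form exactly the wedge $\widehat{AoB}$ subtended at $P$ by the union of the two neighboring beams, the two shadows meeting along the ray $P\!\to\!o$. Using the elementary inscribed-angle relation that the chord from the boundary point at angle $\gamma$ to the boundary point at angle $\beta$ makes direction $\tfrac{\pi}{2}+\tfrac{\beta+\gamma}{2}$, each adjacent beam's shadow turns out to be bounded by its near radial edge and the direction to $o$, and a short computation gives the two shadow widths as $\tfrac{\pi}{2}-\tfrac{\phi_g}{4}\pm\tfrac{\gamma}{2}$.

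The key observation I would emphasize is that these two widths sum to $w=\pi-\tfrac{\phi_g}{2}$, which is \emph{independent} of the entry location $\gamma$ within the gap; this is precisely what makes the conditional probability a clean constant and avoids any integration over the entry point. Dividing the favorable wedge by the full range $2\pi$ of $\theta_u$ and substituting $\phi_g=\tfrac{2\pi}{N_k}-\theta_k$ yields the conditional probability $\tfrac{w}{2\pi}=\tfrac12\!\left(1-\tfrac1{N_k}\right)+\tfrac{\theta_k}{4\pi}$, i.e. the second bracketed factor in \eqref{prop1eq}. Combining the two cases through the law of total probability then gives \eqref{prop1eq} directly, and the hypotheses $N_k\ge 2$ and $\theta_k>0$ are exactly what guarantee two distinct adjacent beams and a well-defined, non-degenerate wedge. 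I expect the main obstacle to be the direction-counting step: carefully verifying that the favorable set is precisely the union of the two adjacent-beam shadows, neither over- nor under-counting directions that graze the gap's own arc or skim past the apex, and confirming the position-independence of $w$, rather than any heavy computation.
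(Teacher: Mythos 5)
Your proposal is correct and follows essentially the same route as the paper's proof: the same beam/gap entry-point decomposition with weight $N_k\theta_k/(2\pi)$, the same favorable direction wedge at a gap entry point, and the same normalization by $2\pi$ followed by the law of total probability. The only difference is cosmetic: the paper obtains the wedge width $(1-1/N_k)\pi+\theta_k/2$ in one step from the inscribed-angle theorem ($\widehat{AoB}=\tfrac{1}{2}\wideparen{AB}$), whereas you re-derive it by splitting the wedge into the two adjacent-beam shadows of widths $\tfrac{\pi}{2}-\tfrac{\phi_g}{4}\pm\tfrac{\gamma}{2}$ and observing that the dependence on the entry location $\gamma$ cancels.
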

\begin{proof}
Due to the equidistant beams, we have 
\begin{align}\label{app1-1}
\widehat{AoB} = \frac{1}{2}\wideparen{AB} =\frac{1}{2}\left[2\pi-\wideparen{AoB}\right]= \frac{1}{2}\left[2\pi - \left(\frac{2\pi}{N_k}-\theta_k\right)\right]
=\left(1-\frac{1}{N_k}\right)\pi + \frac{\theta_k}{2}.
\end{align}
Given that an arbitrary MUE can enter the circle in Fig. \ref{model2} at any direction, this MUE will be instantly covered by mmW with probability $\mathbbm{P}(\boldsymbol{x}_u \in \mathcal{A}) = \frac{N_k\theta_k}{2\pi}$, where $\mathcal{A} \subset \mathbbm{R}^2$ denotes the part of circle's perimeter that overlaps with mmW beams. Therefore, 
\begin{align}\label{app1-2}
\mathbb{P}^c_{k}(N_k,\theta_k) = \mathbb{P}(\boldsymbol{x}_u \in \mathcal{A}) + \left[1-\mathbb{P}(\boldsymbol{x}_u \in \mathcal{A})\right]\frac{1}{2\pi}\widehat{AoB},
\end{align}
where \eqref{app1-2} results from the fact that $\theta_u \sim U\left[0, 2\pi\right]$. Therefore, from \eqref{app1-1} and \eqref{app1-2}, the probability of crossing a mmW beam follows \eqref{prop1eq}.
\end{proof}
We can verify \eqref{prop1eq} by considering an example scenario with $N_k=3$ and $\theta_k = \frac{2\pi}{3}$. For this example, \eqref{prop1eq} results in
$\mathbbm{P}^c_{k}(N_k,\theta_k) =1$ which correctly captures the fact that the entire cell is covered by mmW beams. 
\vspace{-0cm}
\subsection{Cumulative distribution function of the caching duration}
To enable an MUE to use the cached content while not being associated to an SBS, it is critical to analyze the distribution of caching duration $t^c$ for an arbitrary MUE with a random direction and speed. In this regard, consider the small green triangle in Fig. \ref{model2}, which represents the location of an arbitrary MUE $u$, $\boldsymbol{x_u} = \left(x_u, y_u\right) \in \mathbbm{R}^2$, crossing a mmW beam. First, we note that the geometry of the mmW beam of any given SBS can be defined by the location of the SBS, as well as the sides of the beam angle. Without loss of generality, we assume that the SBS of interest is located at the center, such that $\boldsymbol{x}_k = (0,0)$. Therefore, the two sides of the beam angle will be given by \vspace{-1em}
\begin{align}\label{beamangle1}
y = x\tan(\theta_0-\theta_k), y= x\tan(\theta_0), \,\,\,\, x>0.
\end{align}
Assuming that
the MUE $u$ is currently located on the angle side $x = y\cos(\theta_0-\theta_k)$, as shown by the small triangle in Fig. \ref{model2}, then $\theta_0$ in \eqref{beamangle1} will be $\theta_0 = \arccos\left(\frac{x_u}{r_{u,k}(\boldsymbol{x_u})}\right)+\theta_k$, where $r_{u,k}(\boldsymbol{x})=\sqrt{x_u^2 + y_u^2}$. Hereinafter, we will use the parameter $\theta_0$ to simplify our  analysis. Let $F_{t^c}(.)$ be the cumulative distribution function (CDF) of the caching duration $t^c$. Thus,
\begin{align}\label{caching1}
F_{t_u^c}(t_0) = \mathbbm{P}(t_u^c \leq t_0) = \mathbbm{P}(r_u^c \leq v_u t_0),
\end{align}
where $r_u^c$ is the distance that MUE $u$ will traverse across the mmW beam, as shown in Fig. \ref{model2}. Given the location of MUE $\boldsymbol{x}_u$, the minimum possible distance to traverse, $r_u^{\text{min}}$, is
\begin{align}\label{caching2}
r_u^{\text{min}} = \frac{\big|x_u\tan\theta_0 -y_u\big|}{\sqrt{1+ \tan^2\theta_0}}.
\end{align}
In fact, \eqref{caching2} gives the distance of the point $\boldsymbol{x}_u$ from the beam angle side $y = x\tan(\theta_0)$. If $r_u^{\text{min}} >v_u t_0$, then $F_{t_u^c}(t_0) = 0$. Therefore, for the remainder of this analysis we consider $r_u^{\text{min}} \leq v_u t_0$. Next, let $\boldsymbol{x}_u'$ denote the intersection of the MUE's path with line $y = x\tan(\theta_0)$. It is easy to see that $\boldsymbol{x}_u' = \left(x_u + r^c_u \cos\theta_u,y_u + r^c_u\sin\theta_u\right)$. Hence, $y_u + r^c_u\sin\theta_u = \left[x_u + r^c_u \cos\theta_u\right]\tan\theta_0$, and $r^c_u$, i.e., the distance that MUE $u$ traverses during the caching duration $t^c$, is given by: 
\begin{align}\label{caching3}
r^c_u =v_ut_u^c= \frac{y_u-x_u\tan\theta_0}{\tan\theta_0 \cos \theta_u-\sin \theta_u}.
\end{align}

Next, from \eqref{caching1} and \eqref{caching3}, the CDF can be written as  
\begin{align}\label{caching4}
F_{t_u^c}(t_0)=\mathbb{P}\left(\frac{y_u-x_u\tan\theta_0}{\tan\theta_0 \cos \theta_u-\sin \theta_u} \leq v_u t_0\right).
\end{align}
Using the geometry shown in Fig. \ref{model2}, we find the CDF of the caching duration as follows:
\begin{lemma}\label{prop2}
	The CDF of the caching duration, $t^c$, for an arbitrary MUE $u$ with speed $v_u$ is given by
	\begin{align}\label{caching5}
	F_{t^c}(t_0)= \frac{1}{\pi-\theta_k}\bigg(\arccos\left(\frac{r_u^{\text{min}}}{v_ut_0}\right)+\min\bigg\lbrace \arccos\left(\frac{r_u^{\text{min}}}{r_{u,k}(\boldsymbol{x})}\right), \arccos\left(\frac{r_u^{\text{min}}}{v_ut_0}\right)\bigg\rbrace\bigg).
	\end{align}
\end{lemma}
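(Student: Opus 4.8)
The plan is to reduce the probability in \eqref{caching4} to a one-dimensional problem in the travel direction $\theta_u$. First I would rewrite the displacement $r_u^c$ of \eqref{caching3} in terms of the perpendicular distance $r_u^{\min}$ and the angle $\phi$ between the MUE's heading and the perpendicular dropped from $\boldsymbol{x}_u$ onto the far beam side $y=x\tan\theta_0$. Simplifying the denominator of \eqref{caching3} via $\tan\theta_0\cos\theta_u-\sin\theta_u=\sin(\theta_0-\theta_u)/\cos\theta_0$ and comparing with \eqref{caching2} yields the clean relation $r_u^c=r_u^{\min}/\cos\phi$, where $\phi$ is an affine function of $\theta_u$. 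The event $\{r_u^c\le v_u t_0\}$ is therefore exactly $\{|\phi|\le\arccos(r_u^{\min}/(v_u t_0))\}$, and since $\theta_u\sim U[0,2\pi]$ the shifted variable $\phi$ is uniform over its admissible range.

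Next I would identify that admissible range, i.e., the headings for which the MUE actually crosses the ray $y=x\tan\theta_0$, $x>0$. Seen from $\boldsymbol{x}_u$, the extreme directions reaching this ray are the line of sight to the vertex $O=(0,0)$ and the direction parallel to the ray (meeting it at infinity). In the degenerate triangle with vertices $O$, $\boldsymbol{x}_u$, and the point at infinity along the far side, the angle at $O$ equals the beamwidth $\theta_k$ and the angle at infinity is $0$, so the wedge subtended at $\boldsymbol{x}_u$ is $\pi-\theta_k$; this is precisely the normalization in \eqref{caching5}. The perpendicular heading ($\phi=0$) splits this wedge asymmetrically: the gap to the parallel (far) direction is $\pi/2$, while the gap $\beta$ to the direction toward $O$ satisfies, from the right triangle with leg $r_u^{\min}$ and hypotenuse $r_{u,k}(\boldsymbol{x})$, $\cos\beta=r_u^{\min}/r_{u,k}(\boldsymbol{x})$, i.e. $\beta=\arccos(r_u^{\min}/r_{u,k}(\boldsymbol{x}))$.

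Finally I would compute the favorable measure by intersecting the admissible wedge $\phi\in[-\beta,\pi/2]$ with $\{|\phi|\le\phi^*\}$, where $\phi^*=\arccos(r_u^{\min}/(v_u t_0))$ is well defined under the standing assumption $r_u^{\min}\le v_u t_0$. On the far side the cap $\pi/2\ge\phi^*$ is never active, contributing $\phi^*$; on the origin side the heading sweeps only until it reaches $O$, contributing $\min\{\phi^*,\beta\}=\min\{\arccos(r_u^{\min}/(v_u t_0)),\arccos(r_u^{\min}/r_{u,k}(\boldsymbol{x}))\}$. Dividing the total $\phi^*+\min\{\phi^*,\beta\}$ by $\pi-\theta_k$ gives \eqref{caching5}.

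The main obstacle I anticipate is getting the asymmetry of the wedge right: recognizing that the far side of the beam is effectively unbounded (cap $\pi/2$), whereas the origin side truncates the traversal at distance $r_{u,k}(\boldsymbol{x})$ (cap $\beta$), is exactly what produces the $\min$ operator, and treating the two sides symmetrically would yield an incorrect closed form. The supporting steps — the trigonometric simplification of \eqref{caching3} and confirming that uniformity of $\theta_u$ transfers to $\phi$ — are routine once the affine change of variable is in place.
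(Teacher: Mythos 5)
Your proof is correct and follows essentially the same route as the paper's: both reduce the event $\{r_u^c \le v_u t_0\}$ to an angular condition about the perpendicular from $\boldsymbol{x}_u$ to the far beam side, normalize by the admissible heading range $\pi-\theta_k$, and obtain the $\min$ operator from the cap $\arccos\left(r_u^{\text{min}}/r_{u,k}(\boldsymbol{x})\right)$ imposed by the beam side terminating at the SBS vertex. Your write-up is in fact more explicit than the paper's terse argument (the reduction $r_u^c=r_u^{\text{min}}/\cos\phi$ and the asymmetric wedge $[-\beta,\pi/2]$ are left implicit there), but the underlying geometric reasoning and case split are identical.
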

\begin{proof}
	From \eqref{caching1}, $F_{t^c}(t_0)=\mathbb{P}(r_u^c\leq v_u t_0)$. To find this probability, we note that $r_u^c\leq v_u t_0$ if MUE moves between two line segments of length $v_ut_0$ that connect MUE to line $y=x\cos\theta_0$. Depending on $r_{u,k}(\boldsymbol{x})$, the intersection of line segment with $y=x\cos\theta_0$ may have one or two solutions. In case of two intersection points, the two line segments will make two equal angles with the perpendicular line from $\boldsymbol{x}_u$, to $y=x\cos\theta_0$, which each is obviously equal to $\pi-(\pi/2-\theta_k)-\hat{\theta}=\pi/2+\theta_k-\hat{\theta}=\arccos\left(\frac{r_u^{\text{min}}}{v_ut_0}\right)$. Therefore, 
	\begin{align}\label{aaaeq1}
	F_{t^c}(t_0)=\frac{2}{\pi-\theta_k}\arccos\left(\frac{r_u^{\text{min}}}{v_ut_0}\right).
	\end{align}
	In fact, $\theta_u$ must be within a range of $\pi-\theta_k$ for $r_u^c\leq v_ut_0$ to be valid. 
	Now, if this angle is greater than $\pi/2-\theta_k$, only one intersection point exists. Equivalently,
	\begin{align}\label{aaaeq2}
	\!\!\!\!F_{t^c}(t_0)\!=\!\frac{1}{\pi-\theta_k}\!\bigg(\!\!\arccos\left(\frac{r_u^{\text{min}}}{v_ut_0}\right)\!+\!\arccos\left(\frac{r_u^{\text{min}}}{r_{u,k}(\boldsymbol{x})}\right)\!\!\bigg).
	\end{align}
	Integrating \eqref{aaaeq1} and \eqref{aaaeq2}, the CDF for caching duration can be written as \eqref{caching5}.
\end{proof}
The CDF of $t^c$ is shown in Fig. \ref{CDF1} for different MUE distances from the serving SBS. Fig. \ref{CDF1} shows that as the MUE is closer to the SBS, $t^c$ takes smaller values with higher probability which is expected, since the MUE will traverse a shorter distance to cross the mmW beam.\vspace{-1em}
\section{Performance Analysis of the Proposed Cache-enabled Mobility Management Scheme}\label{sec:IV}
\begin{figure}[!t]
	\centering
	\centerline{\includegraphics[width=8cm]{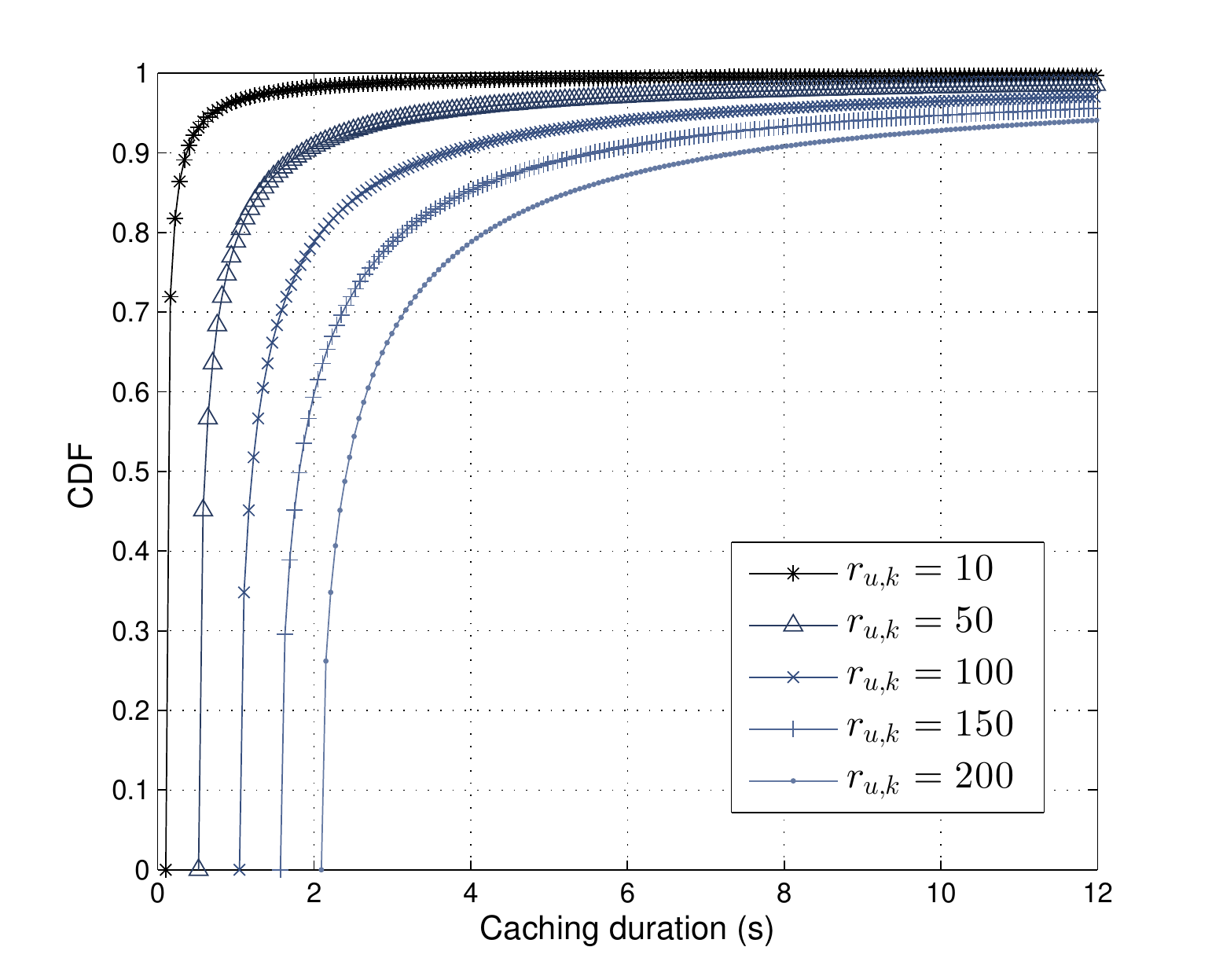}}\vspace{-.2cm}
	\caption{\small CDF of caching duration $t^c$.}\vspace{-0.3cm}
	\label{CDF1}
\end{figure}
Next, we analyze the average achievable rate for content caching, for an MUE with speed $v_u$, direction $\theta_u$, and initial distance $r_{u,k}(\boldsymbol{x})$ from the serving dual-mode SBS. In addition, we evaluate the impact of caching on mobility management. For this analysis, we ignore the shadowing effect and only consider distance path loss. \vspace{-1em}
\subsection{Average Achievable Rate for Caching}
The achievable rate of caching is given by:
\begin{align}\label{rate1}
\!\!R^c(u,k) = \frac{1}{v_ut_u^c}\int_{r_{u,k}(\boldsymbol{x})}^{r_{u,k}(\boldsymbol{x}')}\!\!\!\!\!\!w\log\left(1+ \frac{\beta P_t\psi r_{u,k}^{-\alpha}}{wN_0}\right)dr_{u,k},
\end{align}
where $\beta =(\frac{\lambda}{4\pi r_0})^2r_0^{\alpha}$. The integral in \eqref{rate1} is taken over the line with length $r_u^c$ that connects the MUE location  $\boldsymbol{x}$ to $\boldsymbol{x}'$, as shown in Fig. \ref{model2}. With this in mind, we can find the average achievable rate of caching $\bar{R}^c$ as follows.
\begin{theorem}
	The average achievable rate for an MUE $u$ served by an SBS $k$, $\bar{R}^c(u,k)$, is:
	\begin{align}\label{rate2}
	\!\!\!\!\!\!	&\bar{R}^c(u,k) = \mathbb{P}^c_{k}(N_k,\theta_k)R^c(u,k),\\\label{rate18}
	&=\delta_2 \int_{f(\theta_k)}^{f(0)}\frac{1}{f^2(\theta)}\log\left(1+\delta_1f^{\alpha}(\theta)\right)df(\theta),\\\notag
	& \stackrel{\text{(a)}}{=} \frac{\delta_2}{\ln(2)}\! \bigg[2\sqrt{\delta_1}\arctan(\sqrt{\delta_1}f(\theta_k))\!-\!\frac{\ln(\delta_1f^2(\theta_k)+1)}{f(\theta_k)},\\\label{rate19}
	&-\! 2\sqrt{\delta_1}\arctan(\sqrt{\delta_1}f(0))\!+\!\frac{\ln(\delta_1f^2(0)+1)}{f(0)}\bigg],
	\end{align}
	where $\delta_1 = \frac{\beta P_t\psi}{wN_0}\left[r_{u,k}(\boldsymbol{x})\sin\hat{\theta}\right]^{-\alpha}$. Moreover, $\delta_2=wr_{u,k}(\boldsymbol{x})\sin\hat{\theta}\mathbb{P}^c_{k}(N_k,\theta_k)/v_ut^c$, and $\hat{\theta} = \theta_u-\theta_0+\theta_k$. For (a) to hold, we set  $\alpha=2$ which is a typical value for the path loss exponent of LoS mmW links \cite{Ghosh14}.
\end{theorem}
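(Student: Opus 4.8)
The plan is to establish the three displayed equalities in turn: first the probabilistic weighting in \eqref{rate2}, then a geometric change of the integration variable that produces \eqref{rate18}, and finally a closed-form evaluation of the resulting integral under $\alpha=2$ to obtain \eqref{rate19}. For the first equality I would argue by conditioning. The instantaneous rate $R^c(u,k)$ in \eqref{rate1} is the rate realized \emph{conditioned} on the event that the MUE actually traverses a mmW beam of SBS $k$; outside that event no mmW caching occurs and the contribution is zero. Since Theorem~\ref{prop1} identifies $\mathbb{P}^c_{k}(N_k,\theta_k)$ as exactly the probability of that event, the unconditional average caching rate is the product $\mathbb{P}^c_{k}(N_k,\theta_k)\,R^c(u,k)$, which is \eqref{rate2}.

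For the second equality I would re-parameterize the line integral in \eqref{rate1} by the angular position of the MUE as seen from the SBS rather than by the Euclidean distance $r_{u,k}$. Placing the SBS at the origin, the MUE travels along a straight chord whose perpendicular distance to the SBS is $r_u^{\min}=r_{u,k}(\boldsymbol{x})\sin\hat\theta$, with $\hat\theta=\theta_u-\theta_0+\theta_k$; this follows from the chord geometry of Fig.~\ref{model2} together with \eqref{caching2}. I then introduce $f(\theta)=\sin(\theta_u-\theta_0+\theta)$, so that along the chord $r_{u,k}=r_u^{\min}/f(\theta)$ and the received SNR collapses to $\delta_1 f^{\alpha}(\theta)$ with $\delta_1=\frac{\beta P_t\psi}{wN_0}(r_u^{\min})^{-\alpha}$, matching the stated $\delta_1$. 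Differentiating gives $dr_{u,k}=-r_u^{\min}f^{-2}(\theta)\,df(\theta)$, and the Jacobian factor $r_u^{\min}$ combines with $w$, the $1/(v_ut^c_u)$ prefactor, and $\mathbb{P}^c_k$ to produce precisely $\delta_2=wr_{u,k}(\boldsymbol{x})\sin\hat\theta\,\mathbb{P}^c_k/(v_ut^c)$. The endpoints $\boldsymbol{x}$ and $\boldsymbol{x}'$ map to $f(\theta_k)$ and $f(0)$, and the sign from the Jacobian flips the orientation of the limits, yielding \eqref{rate18}.

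For the final equality I specialize to $\alpha=2$ and evaluate $\int f^{-2}\log(1+\delta_1 f^2)\,df$ by integration by parts, taking $u=\log(1+\delta_1 f^2)$ and $dv=f^{-2}df$. This produces the boundary term $-f^{-1}\log(1+\delta_1 f^2)$ plus $2\delta_1\int(1+\delta_1 f^2)^{-1}df=2\sqrt{\delta_1}\arctan(\sqrt{\delta_1}f)$; converting the base-two logarithm of the rate to the natural logarithm supplies the $1/\ln 2$ factor, and substituting the limits $f(\theta_k)$ and $f(0)$ reproduces \eqref{rate19}. I expect the \emph{main obstacle} to be the geometric step: correctly identifying $r_u^{\min}=r_{u,k}(\boldsymbol{x})\sin\hat\theta$ and the chord parameterization $r_{u,k}=r_u^{\min}/f(\theta)$, and tracking the orientation of the limits so that the Jacobian sign and the ordering $f(\theta_k)\to f(0)$ remain consistent. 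By comparison, the probabilistic weighting and the integration by parts are routine.
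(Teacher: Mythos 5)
Your proposal is correct and follows essentially the same route as the paper's proof: the weighting of $R^c(u,k)$ by $\mathbb{P}^c_{k}(N_k,\theta_k)$ via Theorem~\ref{prop1} to get \eqref{rate2}, the change of variable based on the chord relation $r_{u,k}=r_u^{\text{min}}/f(\theta)$ with $r_u^{\text{min}}=r_{u,k}(\boldsymbol{x})\sin\hat{\theta}$ to get \eqref{rate18} (the paper merely routes through the intermediate angle $\theta$ with $f(\theta)=\sin(\hat{\theta}-\theta)$ before substituting), and the same integration by parts at $\alpha=2$ to get \eqref{rate19}. The only deviation is notational: your $f(\theta)=\sin(\theta_u-\theta_0+\theta)$ is the mirror image of the paper's $f(\theta)=\sin(\hat{\theta}-\theta)$, which simply interchanges the labels $f(0)$ and $f(\theta_k)$ at the two endpoints and leaves the value of the resulting integral unchanged.
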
 \vspace{-1em}
\begin{proof}
	Theorem \ref{prop1} implies that with probability $1-\mathbb{P}^c_{k}(N_k,\theta_k)$, only $\mu$W coverage is available for an MUE. Therefore, the average achievable rate for caching over the mmW frequencies is given by \eqref{rate2}. To simplify \eqref{rate2}, we have
	\begin{align}\label{rate20}
	r_{u,k}\cos\theta = r_{u,k}(\boldsymbol{x}) + r_u\cos\hat{\theta}, \,\, 	r_{u,k}\sin\theta = r_u\sin\hat{\theta},
	\end{align}
	where $\hat{\theta} = \theta_u-\theta_0+\theta_k$ and $\theta$ is an angle between the line connecting MUE to SBS, ranging from $0$ to $\theta_k$. Moreover, $r_u$ is the current traversed distance, with $r_u = r_u^c$ once the MUE reaches $\boldsymbol{x}'$ by the end of caching duration, as shown in Fig. \ref{model2}. From \eqref{rate20}, we find $r_{u,k} = r_{u,k}(\boldsymbol{x})\sin\hat{\theta}/\sin(\hat{\theta}-\theta)$. By changing the integral variable $r_u$ to $\theta$, we can write \eqref{rate2} as 
	\begin{align}\label{rate21}
	\bar{R}^c(u,k) = \delta_2\int_{0}^{\theta_k}\log\left(1+ \delta_1\sin^{\alpha}(\hat{\theta}-\theta)\right)\frac{\cos(\hat{\theta}-\theta)}{\sin^2(\hat{\theta}-\theta)}d\theta,\vspace{-.1cm}
	\end{align}
	where $\delta_1 = \beta P_t\psi(r_{u,k}(\boldsymbol{x})\sin\hat{\theta})^{-\alpha}/wN_0$ and $\delta_2=wr_{u,k}(\boldsymbol{x})\sin\hat{\theta}\mathbb{P}^c_{k}(N_k,\theta_k)/v_ut^c$. Next, we can directly conclude \eqref{rate18} from \eqref{rate21} by substituting $f(\theta) = \sin(\hat{\theta}-\theta)$ in \eqref{rate21}. For $\alpha=2$, which is a typical value for the path loss exponent for LoS mmW links, \eqref{rate18} can be simplified into \eqref{rate19} by taking the integration by parts in \eqref{rate18}.
\end{proof}\vspace{-1em}
\subsection{Achievable gains of caching for mobility management}
From \eqref{traffic1}, \eqref{traffic2}, and \eqref{rate19}, we can find $d^c(u,k)$ which is the distance that MUE $u$ can traverse, while using the cached video content. On the other hand, by having the average inter-cell distances in a HetNet, we can approximate the number of SBSs that an MUE can pass over distance $d^c(u,k)$. Hence, the average number of SBSs that MUE is able to traverse without performing cell search for HO is
\begin{align}\label{perf1}
\eta\approx\bigg\lfloor \frac{\mathbb{E}\left[d^c(u,k)\right]}{l}\bigg\rfloor, 
\end{align}
where the expected value is used, since $d^c(u,k)$ is a random variable that depends on  $\theta_u$. Moreover, $l$ denotes the average inter-cell distance. Here, we note that 
\begin{align}\label{perf2}
\mathbb{E}\left[d^c(u,k)\right] = \int_{0}^{\infty}\left(1-F_{t_u^c}(v_ut)\right)dt,	
\end{align}
where $F_{t^c}(.)$ is derived in Lemma \ref{prop2}. We note that \eqref{perf2} is the direct result of writing an expected value in terms of CDF. Based on the definition of $\eta$ in \eqref{perf1} and considering that the inter-frequency energy consumption linearly scales with the number of scans, we can make the following observation. \vspace{-.5em} 
\begin{remark}
	The proposed caching scheme will reduce the average energy consumption $E^s$ for inter-frequency cell search by a factor of $1/\eta$ with $\eta$ being defined in \eqref{perf1}.
\end{remark}
Furthermore, from the definition of $\gamma_{\text{HOF}}$ in \eqref{HOF1}, we can define the probability of HOF as $\mathbb{P}(D_{u,k}<v_ut_{\text{MTS}})$ \cite{6849322}, where $D_{u,k}=t_{u,k}/v_u$, and $t_{u,k}$ is the ToS. To compute the HOF probability, we use the probability density function (PDF) of a random chord length within a circle with radius $a$, as follows:
\begin{align}\label{perf3}
f_D(D)=\frac{2}{\pi\sqrt{4a^2-D^2}},
\end{align}
where \eqref{perf3} relies on the assumption that one side of the chord is fixed and the other side is determined by choosing a random $\theta \in \left[0,\pi\right]$. This assumption is in line with our analysis as shown in Fig. \ref{model2}. Using \eqref{perf3}, we can find the probability of HOF as follows:
\begin{align}
\mathbbm{P}(D_{u,k}<v_ut_{\text{MTS}})=\int_{0}^{v_ut_{\text{MTS}}} \frac{2}{\pi\sqrt{4a_{k}^2-D^2}}dD=\frac{2}{\pi}\arcsin\left(\frac{v_ut_{\text{MTS}}}{2a_k}\right).\label{idontknow1}
\end{align}
In fact, $\gamma_{\text{HOF}}$ is a binomial random variable whose probability of success  depends on the MUE's speed, cell radius, and $t_{\text{MTS}}$. Hence, by reducing the number of HOs by a factor of $1/\eta$, the proposed scheme will reduce the expected value of the sum $\sum \gamma_{\text{HOF}}$, taken over all SBSs that an MUE visits during the considered time $T$. 

Thus far, the provided analysis are focused on studying the caching opportunities for the mobility management in single-MUE scenarios. However, in practice, the SBSs can only serve a limited number of MUEs simultaneously. Therefore, an HO decision for an MUE is affected by the decision of the other MUEs. In this regard, we propose a cache-enabled mobility management framework to capture the inter-dependency of HO decisions in dynamic multi-MUE scenarios. 
\vspace{-0em}
\section{Dynamic Matching for Cache-enabled Mobility Management}\label{sec:V}

Within the proposed mobility management scenarios, the MUEs have a flexibility to perform either a vertical or horizontal HO, while moving to their chosen target cell. Additionally, as elaborated in Section \ref{sec:IV}, caching enables MUEs to skip a certain HO, depending on the cache size $\Omega$. In fact, there are three HO actions possible for an arbitrary MUE that is being served by an SBS: 1) Execute an HO for a new assignment with a target SBS, 2) Use the cached content and mute HO, 3) Perform an HO to the MBS. Similarly, an MUE assigned to the MBS can decide whether to handover to an SBS, use cached content, or stay connected to the MBS.

Our next goal is to maximize possible handovers to the SBSs in order to increase the traffic offload from the MBS, subject to constraints on the HOF, SBSs' quota, and limited cache sizes.~With this in mind, our goal is to find an HO policy $\boldsymbol{\zeta}$ for MUEs and target BSs\footnote{For brevity, if not specified, we refer to a base station (BS) as either an SBS $k \in \mathcal{K}$ or the MBS $k_0$.} that  satisfies: 
\begin{subequations}
	\begin{IEEEeqnarray}{l}
		\argmin_{\boldsymbol{\zeta}} \sum_{u \in \mathcal{U}} \zeta(u,k_0),\label{opt:a}\\
		\!\!\!\!\!\!\!\!\!\!\!\!\!\text{s.t.}\,\,\,\,\,
	  	\mathbbm{P}\left(\sum_{k \in \mathcal{K}}\zeta(u,k)D_{u,k} < v_ut_{\text{MTS}} \right)\leq P_u^{\text{th}},  \label{opt:b}\\	
	\left[1-\sum_{k \in \mathcal{K}'}\zeta(u,k)\right]T_s \leq 	\frac{\Omega_u}{Q}, \label{opt:c}\\
		\sum_{k \in \mathcal{K}'}\zeta(u,k) \leq 1,\label{opt:d}\\
		\sum_{u \in \mathcal{U}}\zeta(u,k) \leq U_k^{\text{th}}, \hfill \forall k \in \mathcal{K},\label{opt:e}\\
		\zeta(u,k) \in \{0,1\},\label{opt:f}
	\end{IEEEeqnarray}
\end{subequations}
where $\mathcal{K}'=\mathcal{K}\cup \{k_0\}$ and $\boldsymbol{\zeta}$ is a vector of binary elements $\zeta(u,k) \in \{0,1\}$. In fact, a variable $\zeta(u,k)=1$, if MUE $u$ is chosen to execute an HO to the target cell $k$, otherwise, $\zeta(u,k)=0$. Constraints \eqref{opt:b}-\eqref{opt:f} must hold for all $u \in \mathcal{U}$. In fact, the objective in \eqref{opt:a} is to minimize the number of MUEs associated with MBS $k_0$. \eqref{opt:b} ensures that once an MUE $u$ is assigned to an SBS, i.e. $\sum_{k \in \mathcal{K}}\zeta(u,k)=1$, the probability of HOF must be less than a threshold $P_u^{\text{th}}$, determined based on the QoS requirement of the MUE $u$'s service. Constraint \eqref{opt:c} ensures that if an MUE $u$ is not assigned to any SBS nor the MBS, there will be enough cached video segments for the next $T_s$ time duration. Moreover, constraints \eqref{opt:d} and \eqref{opt:e} indicate, respectively, that each MUE can be assigned to at most one BS and each SBS can serve maximum $U_k^{\text{th}}$ MUEs simultaneously. 

We note that using \eqref{idontknow1}, we can rewrite \eqref{opt:b} as $\sum_{k \in \mathcal{K}}\frac{2}{\pi}\arcsin\left(\frac{v_ut_{\text{MTS}}}{2a_k}\right)\zeta(u,k)\leq P_u^{\text{th}}$, which is a linear constraint. Hence, the posed problem in \eqref{opt:a}-\eqref{opt:f} is an integer linear programming (ILP), and thus, it is NP-hard. Although an approximation algorithm can be employed to solve \eqref{opt:a}-\eqref{opt:f}, centralized algorithms are not scalable and typically introduce latency which is not desired for real-time applications such as streaming for mobile users. Moreover, these solutions will typically rely on the current network instances, such as the location, speed and cache size of the MUEs, and, hence, they  fail to capture the dynamics of the system. To show this, we consider two critical scenarios, shown in Fig. \ref{scenarios}, as follows:
\begin{figure}[!t]
	\centering
	\centerline{\includegraphics[width=12cm]{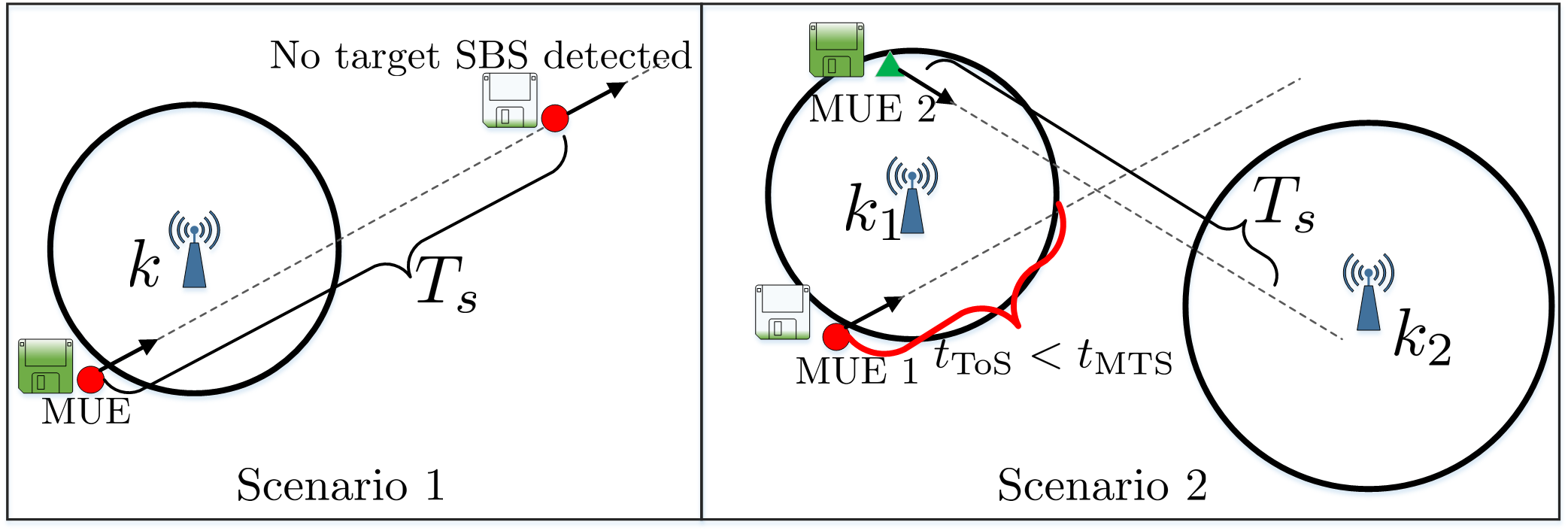}}\vspace{-0.1cm}
	\caption{\small Two dynamic HO scenarios for cache-enabled mobile users.}\vspace{-0cm}
	\label{scenarios}
\end{figure}

\textbf{Illustrative Scenario 1:} Consider a feasible solution for \eqref{opt:a}-\eqref{opt:f}, where an MUE $u$ is not assigned to the target SBS $k$ and will use the cached content for the next $T_s$ time duration, as shown in scenario 1 of Fig. \ref{scenarios}. However, the MUE has to be assigned to the MBS after $T_s$, since eventually no target SBS is detected. Alternatively, the MUE could be assigned to $k$ initially and fill up the cache, while later, it could use the saved cached content to reach the next target cell without requiring to be assigned to the MBS.

\textbf{Illustrative Scenario 2:} Consider a feasible solution for \eqref{opt:a}-\eqref{opt:f} which assigns an arbitrary MUE 1, in Fig. \ref{scenarios}, to a target SBS $k_1$. If there are not enough cached contents for the MUE 1 to move to the next SBS and at the same time HO fails, the MUE has to be assigned to the MBS as shown in scenario 2 of Fig. \ref{scenarios}. Alternatively, we could assign MUE 2 with a large cache size to the SBS $k_1$ such that in case of an HOF, the MUE 2 can reach the next target SBS $k_2$ by using its available cached contents.

These examples show that taking into account the future network information, such the estimated distance from the next target SBS, is imperative to effectively maximize the traffic offloads from the MBS. Therefore, an efficient HO policy must take into account  post-handover scenarios that may occur due to the HOFs. In this regard, we propose a framework based on \emph{dynamic matching theory} \cite{harvard16} which allows effective mobility management, as presented in \eqref{opt:a}-\eqref{opt:f}, while capturing the future network instances, such as the cache size, MUEs' trajectory, and the topology of the network. Next, we present the fundamentals of matching theory and explain how the proposed problem can be formulated as a dynamic matching problem.\vspace{-1em}
\subsection{Handover as a matching game: Preliminaries}
Matching theory is a mathematical framework that provides polynomial time solutions for combinatorial assignment problems such as \eqref{opt:a}-\eqref{opt:f} \cite{Gale}. In a static form, a matching game is defined as a two-sided assignment problem between two disjoint sets of players in which the players of each set are interested to be matched to the players of the other set, according to their preference profiles. A \emph{preference profile} for player $i$, denoted by $\succ_i$, is defined as a complete, reflexive, and transitive binary
relation between the elements of a given set. Within the context of our proposed cache-enabled HO problem, we define the matching problem as follow:\vspace{-.5em}
\begin{definition}\label{def:1}
Given the two disjoint sets of MUEs and BSs, respectively, in  $\mathcal{U}$ and $\mathcal{K}'=\mathcal{K}\cup \{k_0\}$, a single-period \emph{HO matching} is defined as a many-to-one mapping $\mu: \mathcal{U} \cup \mathcal{K}' \rightarrow \mathcal{U} \cup \mathcal{K}'$ that satisfies:
\begin{enumerate}
	\item[1)] $\forall u \in \mathcal{U}$, $\mu(u) \in \mathcal{K}' \cup \{u\}$. In fact, $\mu(u)=k$ means $u$ is assigned to $k$, and $\mu(u)=u$ indicates that the MUE $u$ is not matched to any BS, and thus, will use the cached content.
	\item[2)] $\forall k \in \mathcal{K}'$, $\mu(k) \subseteq \mathcal{U} \cup \{k\}$, and $\forall k \in \mathcal{K}$, $|\mu(k)|\leq U_{k}^{\text{th}}$. In fact, $\mu(k)=k$ implies that no MUE is assigned to the BS $k$.
	\item[3)] $\mu(u)=k$, if and only if $u \in \mu(k)$.\vspace{-1em}
\end{enumerate}
\end{definition}
Note that, by definition, the matching game satisfies constraints \eqref{opt:d}-\eqref{opt:f}. More interestingly, the matching framework allows defining relevant utility functions per MUE and SBSs, which can capture the preferences of MUEs and SBSs. In this regard, the utility that an arbitrary MUE $u \in \mathcal{U}$ assigns to an SBS $k \in \mathcal{K}$ will be:
\begin{align}\label{utility1}
	\Phi(u,k) = 	P_u^{\text{th}}-\mathbbm{P}\left(\sum_{k \in \mathcal{K}}\zeta(u,k)D_{u,k} < v_ut_{\text{MTS}} \right)=P_u^{\text{th}}-\frac{2}{\pi}\arcsin\left(\frac{v_ut_{\text{MTS}}}{2a_k}\right).
\end{align}
Here, we observe that the utility in  \eqref{utility1} is larger for SBSs having a larger cell radius $a_k$. In addition, as the speed of the MUEs increases, the utility generated from those MUEs being assigned to an SBS decreases. 
Meanwhile, the utility that an SBS $k$ assigns to an MUE $u$ is given by\vspace{-1em}
\begin{align}
\Gamma(u,k) = T_s-\frac{\Omega_u}{Q}.
\end{align}
In fact, an SBS assigns higher utility to MUEs that are not capable of using caching for the next time duration $T_s$. Based on the defined utility functions, the preference profile of an arbitrary MUE $u$, $\succ_u$, will be:\vspace{-1em}
\begin{subequations}
\begin{IEEEeqnarray}{rCl}\label{prefer1}
	k \succ_u k' \,\,\,&&\Leftrightarrow \,\,\, \Phi(u,k) > \Phi(u,k'), \label{prefer1-1}\\
	u \succ_u k \,\,\, &&\Leftrightarrow \,\,\, \Phi(u,k)<0, \label{prefer1-2} 
\end{IEEEeqnarray}
\end{subequations}
where $k \succ_u k'$ implies that SBS $k$ is strictly more preferred than SBS $k'$ by MUE $u$. Moreover, $u \succ_u k$ means that an SBS $k$ is not acceptable to an MUE $u$, if and only if the assigned utility is negative. In fact, \eqref{prefer1-2} known as \emph{an individual rationality constraint} and is in line with satisfying the feasibility condition in \eqref{opt:b}. Similarly, we can define the preference profile of an SBS $k$, $\succ_k$, as follows \vspace{-1em}
\begin{subequations}
\begin{IEEEeqnarray}{rCl}\label{prefer2}
	u \succ_k u' \,\,\,&&\Leftrightarrow \,\,\, \Gamma(u,k) > \Gamma(u',k), \label{prefer2-1}\\
	k \succ_k u \,\,\, &&\Leftrightarrow \,\,\, \Gamma(u,k)<0 \label{prefer2-2}, 
\end{IEEEeqnarray}
\end{subequations}
where \eqref{prefer2-2} is the individual rationality requirement for SBSs which is equivalent to satisfying the feasibility constraint in \eqref{opt:c}. With this in mind, the proposed matching game is formally defined as a tuple $\Pi\triangleq (\mathcal{U}\cup \mathcal{K},\boldsymbol{\succ_u},\boldsymbol{\succ_k})$, where $\boldsymbol{\succ_u}=\{\succ_u\}_{u \in \mathcal{U}}$ and $\boldsymbol{\succ_k}=\{\succ_k\}_{k \in \mathcal{K}}$.

To solve this game, one desirable solution concept is to find a \emph{two-sided stable matching} between the MUEs and SBSs, $\mu^*$, which is defined as follow \cite{Roth92}:\vspace{-.5em}
\begin{definition}
	An MUE-SBS pair $(u,k) \notin \mu$ is said to be a \textit{blocking pair} of the matching $\mu$, if and only if $k \succ_{u} \{\mu(u),u\}$ and $u \succ_k \{\mu(k),k\}$.
	Matching $\mu$ is \textit{stable}, $\mu \equiv \mu^*$, if there is no blocking pair.\vspace{-1em}
\end{definition}
A two-sided stable association between MUEs and SBSs ensures fairness for the MUEs. That is, if an MUE $u$ envies the association of another MUE $u'$, then $u'$ must be preferred by the SBS $\mu^*(u')$ to $u$, i.e., the envy of MUE $u$ is not justified.  
\begin{remark}
	For a given single-period HO matching game $\Pi$, the \emph{deferred acceptance (DA)} algorithm \cite{Gale}, presented in Algorithm \ref{algo:1}, is guaranteed to find a two-sided stable association $\mu^*$ between MUEs and SBSs.\vspace{-1em}
\end{remark}
\begin{algorithm}[!t]
	\small
	\caption{DA Algorithm for Single-period Association Between MUEs and SBSs}\label{algo:1}
	\textbf{Inputs:}\,\, $\Pi\triangleq (\mathcal{U}\cup \mathcal{K},\boldsymbol{\succ_u},\boldsymbol{\succ_k})$.\\
	\textbf{Outputs:}\,\, Stable matching $\mu^*$.
	\begin{algorithmic}[1]
		\State  If not already accepted by an SBS, each unmatched MUE $u \in \mathcal{U}$ applies for its most preferred SBS $k \succ_u u$. Remove $k$ from $u$'s preference profile $\succ_u$.
		\State  Each SBS $k \in \mathcal{K}$ receives the proposals from the applicants in Step 1, tentatively accepts  $U_{k}^{\text{th}}$ of most preferred MUEs from new applicants and the MUEs that are so far accepted in $\mu(k)$, and rejects the rest.
		\Repeat \,\,\,Steps $1$ to $2$ \Until{Each MUE $u$ is accepted by an SBS, or $u$ is applied for all SBSs $k \succ_u u$.} 
		\If{$\exists u \in \mathcal{U}, \mu(u) \notin \mathcal{K}$ and ${\Omega_u}/{Q}< T_s$,}
		\State $\mu(u)=u$,
		\Else 
		\State{Assign $u$ to the MBS.}
		\EndIf
	\end{algorithmic}\label{Algorithm1}
\end{algorithm}
Unfortunately, the DA algorithm is not suitable to capture the dynamics of the system which arise from the mobility of the MUEs. In fact, the preference profiles of the MUEs and SBSs only depend on the current state of the system, such as the location of the MUEs, and the cache sizes. In addition, the DA algorithm cannot guarantee stability, if the preference of the MUEs change after HOFs. Thus, to be able to achieve stability for dynamic settings, such as in Scenarios 1 and 2, we need to incorporate the post-HO scenarios into the matching game, such that no MUE can block the stability even after experiencing an HOF. To this end, we extend the notion of one-stage stability in Algorithm \ref{algo:1} into a \emph{dynamic stability} concept that is suitable for the problem at hand. 
\vspace{-1em}
\subsection{Dynamic matching for mobility management in heterogeneous networks}
To account for possible scenarios that may occur after HO, we consider a two-stage dynamic matching game that incorporates within the preference profiles, some of the possible scenarios that may face the MUEs and base stations after handover execution. Such a dynamic matching will allow the MUEs to build preference profiles over different \emph{association plans} rather than SBSs. An association plan is defined as a sequence of two matchings for a given MUE or SBS. For example, $kk'$ is an association plan that indicates an MUE will be assigned to the SBS $k$ followed by another HO to SBS $k'$. In this regard, $k_1k_2 \succ_u k'_1k'_2$ means that MUE $u$ prefers plan $k_1k_2$ to $k'_1k'_2$. With this in mind, we can modify the one-period matching in Definition \ref{def:1} to a relation $\mu^{\dag}: \mathcal{U} \cup \mathcal{K}' \rightarrow (\mathcal{U} \cup \mathcal{K}')^2$, such that $\mu^{\dag}(u)=(\mu_1(u),\mu_2(u))$, where $\mu_1$ and $\mu_2$ are one-period matchings. For example, $\mu^{\dag}(u)=(k,u)$ indicates that MUE $u$ will first perform an HO to SBS $k$, $\mu_1(u)=k$,  followed by using the content of the cache after exiting the coverage of SBS $k$, $\mu_2(u)=u$. Next, we use the following definitions to formally define the stability in dynamic matchings \cite{harvard16}: \vspace{-1em}

\begin{definition}
	An MUE-BS pair $(u,k)$ can \emph{period-1 block} the matching, if any of the following conditions is satisfied: 1) $kk \succ_u \mu^{\dag}(u)$ and $uu \succ_k \mu^{\dag}(k)$; 2) $ku \succ_u \mu^{\dag}(u)$ and $uk \succ_k \mu^{\dag}(k)$; 3) $uk \succ_u \mu^{\dag}(u)$ and $ku \succ_k \mu^{\dag}(k)$; or 4) $uu \succ_u \mu^{\dag}(u)$ and $kk \succ_k \mu^{\dag}(k)$.	
	A matching is \emph{ex ante stable}, if it cannot be period-1 blocked by any MUE/BS or MUE-BS pair. \vspace{-1em}
\end{definition}
 In a dynamic matching problem, either the  MUEs or the BSs may block the matching, after knowing the outcome of the first matching $\mu_1$. In this regard, we define the notion of period-2 blocking and dynamic stability as follows:\vspace{-1em}
\begin{definition}
An MUE $u$ can \emph{period-2 block} a matching $\mu^{\dag}$ if $(\mu_1(u),u)\succ_u \mu^{\dag}(u)$. Similarly, an MUE-BS pair $(u,k)$ can \emph{period-2 block} if any of the following conditions is satisfied: 1) $(\mu_1(u),k)\succ_u \mu^{\dag}(u)$ and $(\mu_1(k),u)\succ_k \mu^{\dag}(k)$, or 2) $(\mu_1(u),u)\succ_u \mu^{\dag}(u)$ and $(\mu_1(k),k)\succ_k \mu^{\dag}(k)$. A matching is said to be \emph{dynamically stable}, if it cannot be period-1 or period-2 blocked by any MUE or MUE-BS pair\footnote{In general, a matching is dynamically stable for any time $t$, if it cannot be period-$t$ blocked by any MUE or MUE-BS pair. Extending the dynamic matching to more than two periods depends on how much information is available for MUEs about the network. In this work, we focus on a two-period matching problem, since it is more tractable and practical.}.\vspace{-1em}
\end{definition}
From Definitions 3 and 4, we can see that, any dynamically stable matching is also an ex ante stable matching. However, ex ante stability does not guarantee dynamic stability. For example, if $\mu^{\dag}(u)=(k,u)$ for an MUE $u$, ex ante stability does not guarantee that the MUE commits to use the cache, if the first handover to SBS $k$ fails. In other words, the MUE may block an ex ante stable matching after the actual outcome of the first matching is known. To help better understand the stability for dynamic matchings, we consider the following simple example.\vspace{-1em}
\begin{example}\label{example1}
	Consider a dynamic matching game $\Pi^{\dag}$, composed of MUEs $\mathcal{U}=\{u_1,u_2\}$, MBS $k_0$, and SBSs $\mathcal{K}=\{k_1,k_2\}$, with $U_k^{\text{th}}=1$ for $k=k_1,k_2$, as shown in scenario 2 of Fig. \ref{scenarios}. The preference plans of MUEs, MBS $k_0$, and SBSs are as follows:\vspace{-1em}
	\begin{align*}
		&\succ_{u_1}: k_1k_0, \underline{k_1u_1},u_1k_0,u_1u_1; &&\succ_{u_2}:k_1u_2, \underline{u_2k_2}, u_2u_2;\\
		&\succ_{k_1}: \underline{u_1k_1},u_2k_1,k_1k_1;\hspace{0.2cm} && \succ_{k_2}: \underline{k_2u_2}, k_2k_2; &&\succ_{k_0}: k_0u_1, \underline{k_0k_0};
			\end{align*}
	where the preference profiles are sorted in descending order and association plans that are not included do not meet the individual rationality constraint. Here,  the underlined matching is one of the possible ex ante stable matchings. However, this matching is not dynamically stable. That is because conditioned to $\mu_1(u)=k_1$, the MUE-MBS pair $(u_1,k_0)$ will period-2 block the matching, since $k_1k_0 \succ_{u_1} k_1u_1$ and $k_0u_1 \succ_{k_0} k_0k_0$. In practice, such a blocking occurs if the MUE experiences an HOF with its first matching to $k_1$.
\end{example} 
\vspace{-1em}
Next, we propose an algorithm that finds a dynamically stable solution for the proposed mobility management problem.
\vspace{-1em}
\subsection{Dynamically stable matching algorithm for mobility management}
To find the dynamically stable solution, we note that the solution must first admit the ex ante stability. Therefore, we propose an algorithm, inspired from \cite{harvard16} that yields an ex ante stable association in the first stage, followed by a simple modification to resolve any possible period-2 blocking cases. For each MUE $u$, let $\mathcal{P}_u=\cup_{k \in \mathcal{K}}\{kk,uk,ku\}$ be the set of all plans considered by $u$. The algorithm proceeds as follows:\\
\textbf{Stage-1 (Finding an ex ante stable matching):}
\begin{enumerate}
	\item For each MUE $u\in\mathcal{U}$, if $uu\succ_u \kappa$, for all $\kappa \in \mathcal{P}_u$, then $u$ does not send any plan proposal to the BSs. Otherwise, MUE $u$ sends a plan proposal to a BS, according to the most preferred plan $\kappa_u^*$ as follows. If $\kappa_u^*=kk$, MUE $u$ sends a request for a two-period association to the BS $k$. If $\kappa_u^*=ku$, the MUE sends an association request to BS $k$, only for period-1. Similarly, if $\kappa_u^*=uk$, the MUE sends an association request to $k$ only for period-2. The MUE removes $\kappa_u^*$ from its preference profile for the rest of the procedure.
	\item Each SBS $k \in \mathcal{K}$ receives the plan proposals and tentatively accepts the most preferred plans, such that the quota $U_k^{\text{th}}$ is not violated at each period. Clearly, any accepted plan $\kappa$ by SBS $k$ satisfies $\kappa \succ_k kk$.
	\item MUEs with rejected plans apply in the next round, based on their next most preferred plan. The first stage of the algorithm converges, once no plan is rejected.
\end{enumerate}\vspace{-1em}
\begin{proposition}\label{prop:1}
	Stage-1 of the proposed algorithm in Algorithm \ref{algo:2} converges to an ex ante stable association between MUEs and BSs.
\end{proposition}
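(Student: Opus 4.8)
The plan is to adapt the two classical arguments behind the Gale--Shapley deferred acceptance procedure---termination and non-existence of blocking pairs---to the plan-based, two-period setting, treating the per-period quotas $U_k^{\text{th}}$ as two coupled constraints at each SBS. First I would establish convergence. Each MUE $u$ proposes plans strictly in decreasing order of $\succ_u$ and, by construction, permanently removes a plan from its profile upon rejection. Since $|\mathcal{P}_u|=3|\mathcal{K}|$ is finite for every $u$, each MUE issues at most $3|\mathcal{K}|$ proposals, so the total number of proposal rounds is bounded by $3|\mathcal{K}|\,|\mathcal{U}|$. The procedure therefore halts once every MUE either holds a tentatively accepted plan or has exhausted all plans it prefers to $uu$; at that point no further proposals are made, which is precisely the stopping condition in step~3.

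Next I would verify that the terminal assignment $\mu^{\dag}$ is a feasible matching and record the key monotonicity invariant. By step~2, each SBS only tentatively accepts plans that keep its quota $U_k^{\text{th}}$ satisfied \emph{in each period separately}, and it never drops a held MUE in favor of a less-preferred one. Hence, running the acceptance rule period by period, the set of MUEs held by $k$ in a given period can only improve with respect to $\succ_k$ as the rounds progress. This per-period monotonicity is the analogue of the standard DA invariant and is what drives the stability argument.

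The core is ruling out period-1 blocking. Suppose, for contradiction, that $(u,k)$ period-1 blocks $\mu^{\dag}$ via one of the four cases in the definition of period-1 blocking. In each case $u$ strictly prefers a plan $\kappa$ that associates it with $k$ in the relevant period(s) to its terminal plan, i.e. $\kappa \succ_u \mu^{\dag}(u)$, while $k$ strictly prefers the corresponding assignment involving $u$ to $\mu^{\dag}(k)$ in those period(s). Since $u$ ends holding $\mu^{\dag}(u)$ and proposes in decreasing $\succ_u$-order, $u$ must have proposed $\kappa$ to $k$ at an earlier round and been rejected. At the moment of that rejection---and hence at termination, by the per-period monotonicity above---SBS $k$ was holding, in each period that $\kappa$ would occupy, a full quota of MUEs each at least as preferred as $u$. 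Consequently $k$'s terminal assignment is, period by period, at least as good for $k$ as accepting $u$ under $\kappa$, contradicting the blocking requirement that $k$ strictly prefers the deviation with $u$. No period-1 blocking pair can therefore exist, and $\mu^{\dag}$ is ex ante stable.

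The main obstacle I anticipate is the per-period bookkeeping in the monotonicity invariant: unlike ordinary DA, here a plan may occupy $k$ in period~1 only ($ku$), period~2 only ($uk$), or both ($kk$), so an SBS simultaneously arbitrates two quotas over plans that draw on different period-resources. Care is needed to show that comparing, say, a both-period plan against a period-2-only plan still leaves each period's held set monotone in $\succ_k$, so that the rejection argument can be applied independently in each period. I expect this to reduce to invoking the standard single-quota DA invariant twice---once per period---once the plan structure is made explicit, but it is the step where the two-period nature genuinely departs from the classical proof.
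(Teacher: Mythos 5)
Your proof is correct and takes essentially the same route as the paper's: a deferred-acceptance rejection argument combined with the invariant that each SBS's tentative holding only improves with respect to $\succ_k$ across rounds, so any plan $\kappa$ that an MUE $u$ prefers to $\mu^{\dag}(u)$ must have been proposed and rejected in favor of something $k$ ranks above it, contradicting the existence of a period-1 blocking pair. You are in fact somewhat more complete than the paper, which asserts the improvement invariant without the per-period quota bookkeeping you flag and omits the termination argument entirely, even though convergence is part of the stated claim.
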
\vspace{-1.5em}
\begin{proof}
	Assume an MUE-BS pair $(u,k)$ period-1 blocks the matching $\mu^{\dag}$. In consequence, there is a plan $\kappa \in \{ku,uk,kk\}$ for $u$ and a corresponding plan for $k$ that both prefer to their current matching in $\mu^{\dag}$. If $\kappa \succ_{u} \mu^{\dag}(u)$, then the MUE $u$ must have sent a proposal for $\kappa$ to $k$ prior to its associated plan in $\mu^{\dag}$. Since $\kappa$ is not eventually accepted, that means at some point, the SBS $k$ has rejected $\kappa$ in favor of another plan. Since the matching for SBSs improves at each round, we conclude that $\kappa$ is less preferred by $k$ compared to $\mu^{\dag}(k)$. This contradicts the first assumption, thus, such a period-1 blocking pair does not exist and $\mu^{\dag}$ is ex ante stable.
\end{proof}

To avoid period-2 blockage, we introduce a certain structure to the preference profile of the SBSs as follows. For any SBS for whom the maximum quota of $U_k^{\text{th}}$ MUEs are assigned, i.e. $|\mu_2^{\dag}(k)|=U_{k}^{\text{th}}$,\vspace{-1em}
\begin{align}\label{cond1}
	\mu^{\dag} \succ_k \left(\mu_1^{\dag}(k),\tilde{\mu_2}^{\dag}(k)\cup \{u\}\right),
\end{align}
where $\tilde{\mu_2}^{\dag}(k)$ is ${\mu_2}^{\dag}(k)$ with one associated MUE removed to accommodate a new matching with MUE $u$. In fact, \eqref{cond1} implies that an MUE cannot period-2 block the matching with any SBS $k$ that is associated to $U_k^{\text{th}}$ MUEs. In addition, \vspace{-.5em}
\begin{align}\label{cond2}
	  (\mu_1(k_0),u)\succ_{k_0} \mu^{\dag}\iff P_{\mu_1^{\dag}(u)}^{\text{th}}-\frac{2}{\pi}\arcsin\left(\frac{v_ut_{\text{MTS}}}{2a_{\mu_1^{\dag}(u)}}\right)<\epsilon,
\end{align}
where $ \epsilon$ is a non-negative scalar. In fact, \eqref{cond2} allow MUEs that are assigned to SBSs in period 1, with not small enough HOF probability, to be assigned to the MBS in period 2. Another alternative was to set $P^{\text{th}}$ a small value from the start. However, this policy will discourage MUEs to be assigned to SBSs and could increase the load on the MBS. With this in mind, we construct the second stage of the algorithm as follows:\\
\textbf{Stage-2 (Remove period-2 blocking pairs):} Apply the deferred acceptance algorithm shown in Algorithm \ref{algo:1} to a subset of MUEs with $\mu_2^{\dag}(u)=u$, and subset of BSs with $|\mu_2^{\dag}(k)|<U_k^{\text{th}}$, while considering the constraints in \eqref{cond1} and \eqref{cond2}.
\begin{algorithm}[!t]
	\small
	\caption{Proposed Algorithm for Dynamic Matching Between MUEs and BSs}
	\textbf{Inputs:}\,\, Preference plans $\kappa$ for all MUEs, MBS, and SBSs.\\
	\textbf{Outputs:}\,\, Dynamically stable matching $\mu^*$.
	\begin{algorithmic}[1]
		\item[]\textit{Phase 1:}
		\State  For each MUE $u\in\mathcal{U}$, if $uu\succ_u \kappa$, for all $\kappa \in \mathcal{P}_u$, then $u$ does not send any plan proposal to the BSs. Otherwise, MUE $u$ sends a plan proposal to a BS, according to the most preferred plan $\kappa_u^*$.
		\State  Each SBS $k \in \mathcal{K}$ receives the plan proposals and tentatively accepts most preferred plans (also compared to plans that are previously accepted), such that the quota $U_k^{\text{th}}$ is not violated at each period. Clearly, any accepted plan $\kappa$ by SBS $k$ satisfies $\kappa \succ_k kk$.
		\Repeat \,\,\,Steps $1$ to $2$ \Until{No plan is rejected. The yielded ex ante stable matching is denoted by $\mu^{\dag}=(\mu_1^{\dag},\mu_2^{\dag})$.} \newline
		\noindent\textit{Phase 2:}
		\If{$\exists u \in \mathcal{U}, \mu_2^{\dag}(u) = u$,}
		apply DA algorithm in Algorithm \ref{algo:1} to the subset of MUEs with $\mu_2^{\dag}(u) = u$ and the subset of BSs with $|\mu_2^{\dag}(k)|<U_k^{\text{th}}$, considering the constraints in \eqref{cond1} and \eqref{cond2}. Return yielded matching.
		\Else 
		\State return $\mu^{\dag}$.
		\EndIf
	\end{algorithmic}\label{algo:2}
\end{algorithm}
The proposed two-stage algorithm is summarized in Algorithm  \ref{algo:2}. Reconsidering Example \ref{example1}, it is easy to follow that Algorithm \ref{algo:2} yields the following solution which is dynamically stable\footnote{Here, we assume that \eqref{cond2} holds for $u_1$. Otherwise, the ex ante stable solution in Example \ref{example1} is also dynamically stable, since $k_0$ will not make a period-2 block pair with $u_1$.}:\vspace{-1em}
	\begin{align*}
&\succ_{u_1}: \underline{k_1k_0}, k_1u_1,u_1k_0,u_1u_1; &&\succ_{u_2}:k_1u_2, \underline{u_2k_2}, u_2u_2;\\
&\succ_{k_1}: \underline{u_1k_1},u_2k_1,k_1k_1;\hspace{0.2cm} && \succ_{k_2}: \underline{k_2u_2}, k_2k_2; &&\succ_{k_0}:  \underline{k_0u_1},k_0k_0.
\end{align*}
For the proposed algorithm, we can state the following results:\vspace{-1em}
\begin{theorem}
	The proposed two-stage algorithm in Algorithm \ref{algo:2} is guaranteed to converge to a dynamically stable association between MUEs and BSs.
\end{theorem}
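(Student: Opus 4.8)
The plan is to verify the two requirements of \emph{dynamic stability} in Definition~4 in turn—absence of period-1 blocking (ex ante stability) and absence of period-2 blocking—after first noting that Algorithm~\ref{algo:2} terminates. Convergence of \textit{Phase~1} is exactly Proposition~\ref{prop:1}: each MUE proposes plans in decreasing preference order and deletes every rejected plan, while the plans tentatively held by each SBS only improve, so the round terminates once no plan is rejected. \textit{Phase~2} then runs the deferred acceptance routine of Algorithm~\ref{algo:1} on the strictly smaller instance formed by the MUEs with $\mu_2^{\dag}(u)=u$ and the BSs with residual quota $|\mu_2^{\dag}(k)|<U_k^{\text{th}}$; this is the classical Gale--Shapley procedure on a finite market and hence converges.

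I would next show that ex ante stability, guaranteed for the Phase-1 output $\mu^{\dag}$ by Proposition~\ref{prop:1}, survives Phase~2. The crucial structural fact is that Phase~2 modifies only the \emph{second-period} assignment of cache-using MUEs, matching them to BSs that still have a free period-2 slot, and never alters any period-1 assignment $\mu_1^{\dag}$. Inside the deferred acceptance step an MUE is moved only to a BS it strictly prefers to using its cache, and an SBS accepts only an MUE it strictly prefers to its incumbent, so under $\succ_u$ and $\succ_k$ both sides' plans weakly improve. Because a period-1 blocking pair (any of the four cases of Definition~3) requires both players to strictly prefer a joint deviation to their \emph{current} plan, weak improvement on both sides implies that no pair that failed to block $\mu^{\dag}$ can block the Phase-2 output; ex ante stability therefore carries over.

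It then remains to exclude period-2 blocking, for which I would run through the cases of Definition~4. A mutual period-2 swap of type~1 needs an SBS with a free period-2 slot, but precisely these SBSs together with the cache-using MUEs constitute the Phase-2 instance, whose Gale--Shapley output contains no blocking pair. A block in which an MUE reverts to its cache (the single-MUE condition and type~2) cannot involve an MUE already matched by Phase~2, since that matching is individually rational; and it cannot involve an MUE holding a two-period plan $kk$, because acquiring $kk$ already required a free period-2 slot at $k$, which—by the preference-ordered proposals and the per-period quota acceptance of \textit{Phase~1}—would have caused the strictly preferred single-period plan to be accepted earlier. Finally, the two configurations that Phase~2 cannot touch, namely a block with a saturated SBS $|\mu_2^{\dag}(k)|=U_k^{\text{th}}$ or with the MBS, are ruled out \emph{by construction} through the preference constraints~\eqref{cond1} and~\eqref{cond2}: \eqref{cond1} encodes that a saturated SBS never prefers evicting an incumbent for $u$, and \eqref{cond2} admits an MUE to the MBS in period~2 exactly when its post-HO failure probability is too high. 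Combining these cases shows $\mu^*$ admits no period-2 block, which together with ex ante stability yields dynamic stability.

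The step I expect to be the main obstacle is precisely the coupling between the two phases: one must check that the repairs carried out in Phase~2 to remove period-2 instabilities cannot silently create a period-1 blocking pair, and simultaneously that the four period-2 configurations of Definition~4 are \emph{exhaustively} covered by the combination of the Phase-1 quota structure, the side constraints~\eqref{cond1}--\eqref{cond2}, and the Phase-2 deferred acceptance. The delicate sub-case is the two-period plan $kk$ discussed above, whose stability against reverting to the cache rests entirely on the fact that Phase~1 processes proposals in preference order against \emph{per-period} quotas.
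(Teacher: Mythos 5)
Your proof follows essentially the same route as the paper's: ex ante stability of the Phase-1 output via Proposition~\ref{prop:1}, followed by a case analysis showing that every potential period-2 block is eliminated either by the Phase-2 deferred acceptance, by the saturation constraint~\eqref{cond1}, or by the MBS constraint~\eqref{cond2}. You are in fact somewhat more careful than the paper, which silently applies Proposition~\ref{prop:1} to the final output $\mu^*$: your observation that Phase~2 fixes all period-1 assignments and weakly improves both sides' plans, so that ex ante stability survives the Phase-2 repairs, makes explicit a step the paper's proof takes for granted.
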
\vspace{-1.5em}
\begin{proof}
	From Proposition \ref{prop:1}, the solution is guaranteed to be ex ante stable. Therefore, MUEs and BSs will not period-1 block the matching. The rest of the proof easily follows the fact that the BSs will not make a period-2 blocking pair with any MUE, due to the constraints in \eqref{cond1} and \eqref{cond2}. In fact, if there is any period-2 blocking pair $(u,k)$, there are four possible cases to consider: 1) $kk \succ_u \mu^*(u)$ and $uu \succ_k \mu^*(k)$,
		2) $uk \succ_u \mu^*(u)$ and $ku \succ_k \mu^*(k)$,
		3) $uk \succ_u \mu^*(u)$ and $u'u \succ_k \mu^*(k)$, where $u' \neq u$, or
		4) $k'k \succ_u \mu^*(u)$ and $u'u \succ_k \mu^*(k)$, where $k' \neq k$ and $u' \neq u$.
	The first two cases are not possible, since they indicate that $(u,k)$ can period-1 block $\mu^*$ which contradicts ex ante stability. Considering the last two cases, since MUE $u$ is not associated with SBS $k$ in period-2, that implies that $k$ has already been assigned to $U_{k}^{\text{th}}$ MUEs. Otherwise, $u$ would be assigned to $k$ during the second stage of Algorithm \ref{algo:2}. Hence, due to the constraint \eqref{cond1}, $k$ will not make a period-2 blocking pair with $u$. Similarly, the MBS will not make a period-2 blocking pair with any MUE $u$ that is not assigned to the MBS during the second stage. Thus, no period-2 blocking pair exists and the solution $\mu^*$ satisfies dynamic stability. 
\end{proof}\vspace{-0em}

To analyze the signaling overhead of the proposed algorithm, we consider the total number of HO requests sent to a target SBS by the MUEs. Additional control signals from the SBSs to MUEs can be managed by using a broadcast channel and do not significantly contribute to the overhead of the proposed scheme. In this regard, consider the worst-case scenario in which the initial cache size is $\Omega_u=0$ for all $u \in \mathcal{U}$. Therefore, all MUEs seek to perform an HO to the target SBS $k$ by sending a request for plan $\kappa = ku$ during Stage-1 of the proposed algorithm. The SBS $k$ accepts up to $U_k^{\text{th}}$  association plans and rejects the rest. Clearly, if there is one target SBS for the MUEs, the signaling overhead will be $\mathcal{O}(U)$. Otherwise, rejected MUEs will send an HO request to the next target SBS, based on their preference profiles. The maximum signaling overhead occurs for a case when all MUEs have the same preference profile as it introduces the highest competition among MUEs. In this case, the signaling overhead of the proposed algorithm will be $\mathcal{O}(UK)$. In addition, in Section \ref{sec:sim}, we will discuss how caching capabilities will reduce the overhead of the proposed algorithm. 

\begin{table}[t!]
	\scriptsize
	\centering
	\caption{
		\vspace*{-0em}Simulation parameters}\vspace*{-1em}
	\begin{tabular}{|c|c|c|}
		\hline
		\bf{Notation} & \bf{Parameter} & \bf{Value} \\
		\hline
		$f_c$ & Carrier frequency & $73$ GHz\\
		\hline
		$P_{t,k}$ & Total transmit power of SBSs & $\left[20, 27, 30\right]$ dBm\\
		\hline
		$K$ & Total number of SBSs & $50$\\
		\hline
		$w$ & Available Bandwidth & $5$ GHz\\
		\hline
		($\alpha_{\text{LoS}}$,$\alpha_{\text{NLoS}}$) & Path loss exponent& ($2,3.5$) \cite{Ghosh14}\\
		\hline
		$d_0$ & Path loss reference distance& $1$ m \cite{Ghosh14}\\
		\hline
		$G_{\text{max}}$ & Antenna main lobe gain& $18$ dB\cite{7110547} \\
		\hline
		$G_{\text{min}}$ & Antenna side lobe gain& $-2$ dB\cite{7110547} \\
		\hline
		$N_k$ & Number of mmW beams& $3$ \\
		\hline
		$\theta_{m}, \theta_k$ & beam width& $10^{\circ}$\cite{7110547} \\
		\hline
		$N_0$ & Noise power spectral density& $-174$ dBm/Hz\\
		\hline
		$t_{\text{MTS}}$ & Minimum time-of-stay& $1$s \cite{3gpp} \\
		\hline
		$Q$ & Play rate& $1$k segments per second \\
		\hline
		$B$ & Size of video segments& $1$ Mbits  \\
		\hline
		$(v_{\text{min}},v_{\text{max}})$ & Minimum and maximum MUE speeds& $(1,16)$ m/s  \\
		\hline
		$E^s$ & Energy per inter-frequency scan& $3$ mJ \cite{power}  \\
		\hline
	\end{tabular}\label{tab1}\vspace{-.5em}
\end{table}
\vspace{-.6cm}
\section{Simulation Results}\label{sec:sim}
For simulations, we consider a HetNet composed of $K=50$ SBSs distributed uniformly across a circular area with radius $500$ meters with a single MBS located at the center and a minimum inter-cell distance of $30$ meters. Moreover, the transmit power of SBSs are chosen randomly from the set of powers in $\left[20,27,30\right]$ dBm. The main parameters are summarized in Table \ref{tab1}. In our simulations, we consider the overall transmit-receive antenna gain from an interference link to be random. All statistical results are averaged over a large number of independent runs. Next, we first investigate the gains achievable by the proposed cache-enabled scheme for a single user scenario. Then, we will evaluate the performance of the proposed dynamic matching approach by extending the results for scenarios with multiple MUEs in which SBSs can only serve a limited number of MUEs. 
\begin{figure}[!t]
	\centering
	\centerline{\includegraphics[width=10cm]{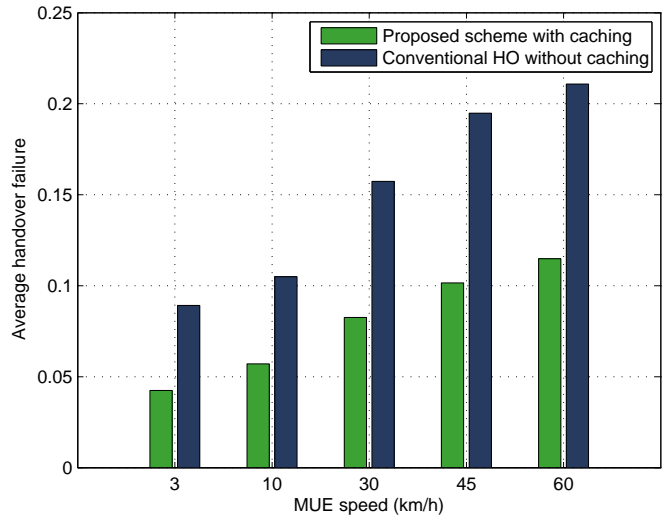}}\vspace{-.4cm}
	\caption{\small HOF vs different MUE speeds.}\vspace{-0.3cm}
	\label{main1sim}
\end{figure}\vspace{0em}
\vspace{-.7cm}
\subsection{Analysis of the proposed cache-enabled mobility management for single user scenarios}
Fig. \ref{main1sim} compares the average HOF of the proposed scheme with a conventional HO mechanism without caching. The results clearly demonstrate that caching capabilities, as proposed here, will significantly improve the HO process for dense HetNets. In fact, the results in Fig. \ref{main1sim} show that caching over mmW frequencies will reduce HOF for all speeds, reaching up to $45 \%$ for MUEs with $v_u=60$ km/h.

\begin{figure}[!t]
	\centering
	\centerline{\includegraphics[width=10cm]{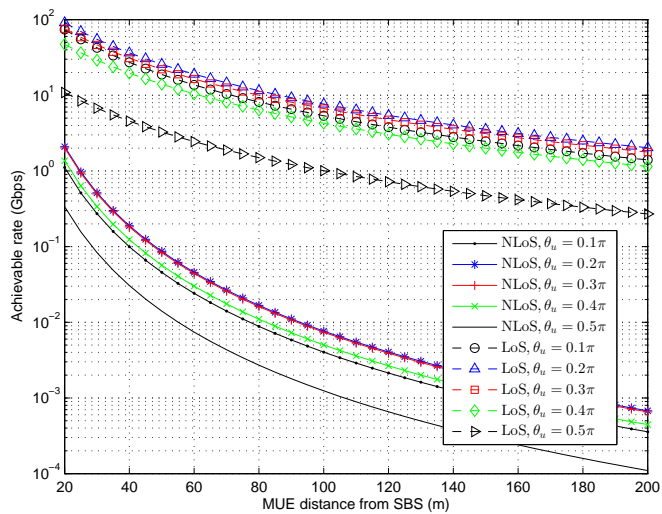}}\vspace{-.4cm}
	\caption{\small Achievable rate of caching vs $r_{u,k}(\boldsymbol{x})$ for different $\theta_u$.}\vspace{-1em}
	\label{rate}
\end{figure}\vspace{0em}
 
Fig. \ref{rate} shows the achievable rate of caching for an MUE with $v_u=60$ km/h, as a function of different initial distances $r_{u,k}(\boldsymbol{x})$ for various $\theta_u$. The results in Fig. \ref{rate} show that even for MUEs with high speeds, the achievable rate of caching is significant, exceeding $10$ Gbps, for all $\theta_u$ values and inital distance of $20$ meters from the SBS. However, we can observe that the blockage can noticeably degrade the performance. In fact, for NLoS scenarios,  the maximum achievable rate at a distance of $20$ meters decreases to $2$ Gbps.
\vspace{-1em}
\subsection{Performance of the proposed dynamically stable mobility management algorithm}
Here, we consider the set of MUEs entering a target cell coverage region with random directions and speeds. Moreover, the cache sizes of the MUEs are initially $\Omega_u=10^4$ segments for all MUEs. In addition, each SBS can serve up to $U_{k}^{\text{th}}=10$ MUEs. Depending on the speed of the MUE, its direction, and the location of the next target SBS, MUEs form their preferences over different plans as elaborated in Section \ref{sec:V}.  
\begin{figure}[!t]
	\centering
	\centerline{\includegraphics[width=10cm]{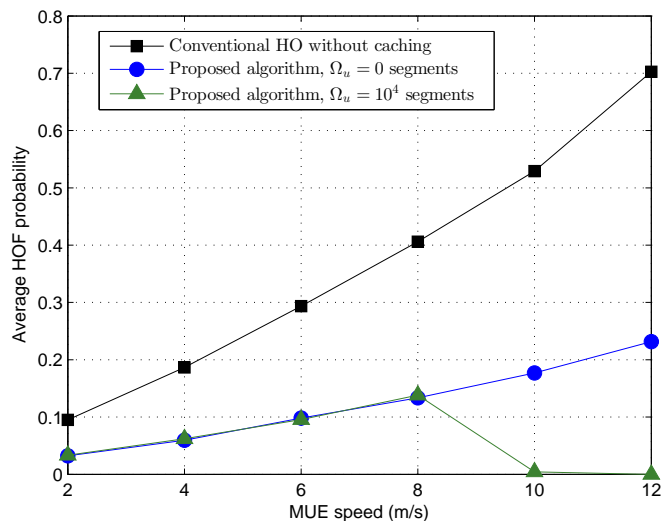}}\vspace{-.5cm}
	\caption{\small Average HOF probability versus MUEs' speeds.}\vspace{-.8cm}
	\label{sim3}
\end{figure}\vspace{0em}

In Fig. \ref{sim3}, the average HOF probability of the proposed algorithm is compared with a conventional scheme that does not incorporate caching, versus the speed of the MUEs. The HOF probability is defined as the ratio of the MUEs with HOF to the total number of MUEs, for $U=20$ and $U_{k}^{\text{th}}=10$. The results in Fig. \ref{sim3} show that the HOF probability increases with the speed of the MUEs, since the ToS will decrease for higher MUE speeds. In addition, we observe that the proposed algorithm can significantly reduce the HOF probability by leveraging the information on the MUE's trajectory and the network's topology. Fig. \ref{sim3} also shows that for a non-zero initial cache sizes of $\Omega_u=10^4$ segments, the algorithm is considerably robust against HOF. In fact, the HOF probability declines for speeds beyond $v_u=8$ m/s, since higher speed allows the MUE to traverse larger distance before the cached video segments run out. Therefore, more MUEs will be able to skip an HO to the target cell and use the cached content to move to the next available SBS. 

\begin{figure}[!t]
	\centering
	\centerline{\includegraphics[width=10cm]{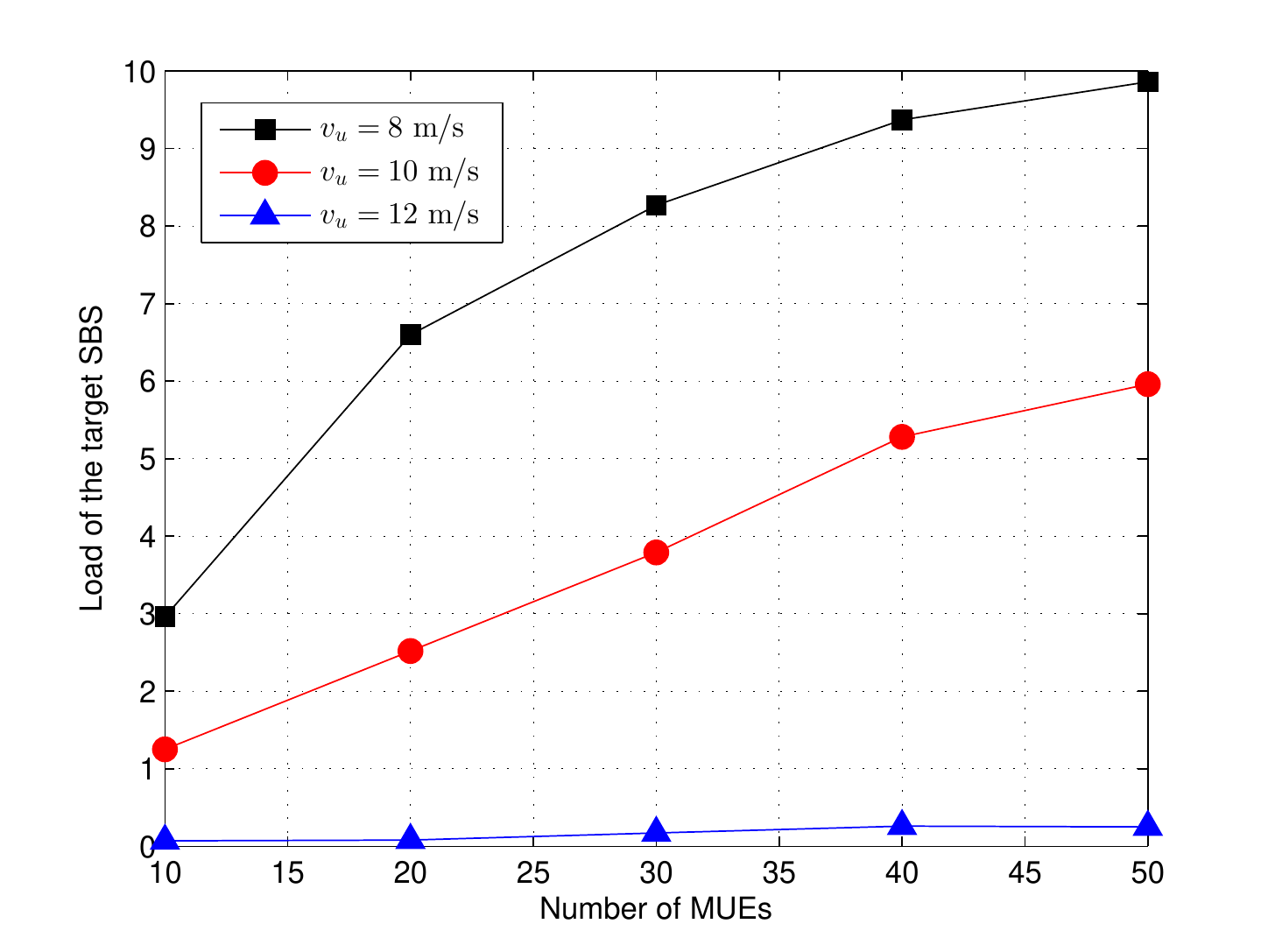}}\vspace{-.4cm}
	\caption{\small Load of the target SBS vs the number of MUEs.}\vspace{-0.2cm}
	\label{load1}
\end{figure}\vspace{0em}
Fig. \ref{load1} shows the load of the target cell versus the number of MUEs for different MUE speeds $v_u=8, 10,$ and $12$ m/s, SBS quota $U_{k}^{\text{th}}=10$, and initial cache size $\Omega_u=10^4$ segments. Here, we observe that the proposed algorithm associates less MUEs to the target cell as the speed increases. That is due to two reasons: 1) higher speeds decrease the ToS and increase the chances of HOFs, and 2) with higher speeds, MUEs can traverse longer distances by using $\Omega_u$ cached segments and it is more likely that they can reach to the next target SBS. Fig. \ref{load1} shows that the load of the target cell reduces up to $45 \%$ when $v_u$ increases from $8$ to $10$ m/s for $U=40$~MUEs.

\begin{figure}[!t]
	\centering
	\centerline{\includegraphics[width=10cm]{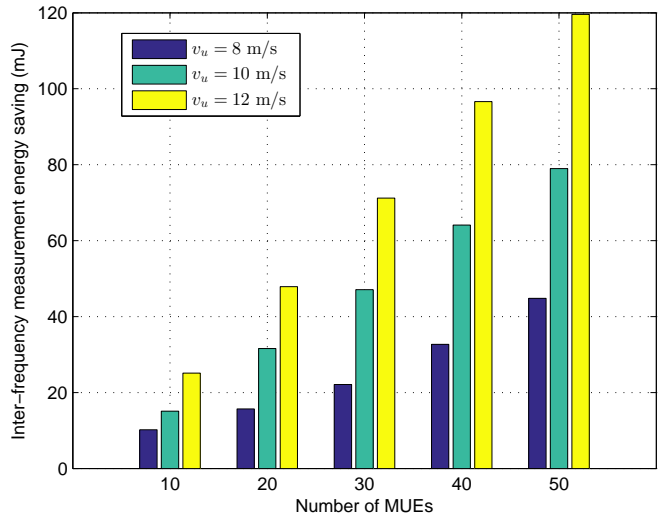}}\vspace{-.4cm}
	\caption{\small Energy savings for inter-frequency measurements vs number of MUEs.}\vspace{-0.2cm}
	\label{load2}
\end{figure}\vspace{0em}

In Fig. \ref{load2}, the inter-frequency measurement energy savings yielded by the proposed algorithm are shown as a function of the number of MUEs. Fig. \ref{load2} shows the total saved energy for MUEs that will use the cached content and do not perform any inter-frequency measurements for handover to an SBS for an initial cache size of $\Omega_u= 10^4$ segments and different MUE speeds. For $U=50$,  MUEs that perform conventional handover without caching will require $UE^s=150$ mJ total energy for performing inter-frequency measurements. However, the results in Fig. \ref{load2} show that the proposed scheme achieves up to $80 \%$, $52 \%$, and $29 \%$ gains in saving energy, respectively, for MUE speeds $v_u=8, 10,$ and $12$ m/s by leveraging cached segments and muting unnecessary cell search. Given that the required energy for measurements linearly scales with the number of MUEs, the results in Fig. \ref{load2} can also be interpreted as the offloading gains of the proposed approach, compared with conventional HO with no caching. Moreover, these results are consistent with those shown in Fig. \ref{load1}. In fact, as the speed of MUEs increases, the HOF probability increases, and thus, MUEs tend to be assigned to the MBS or use their cached content. In addition, fast moving MUEs are more likely to reach the next target cell before the cached content runs out.


\begin{figure}[!t]
	\centering
	\centerline{\includegraphics[width=10cm]{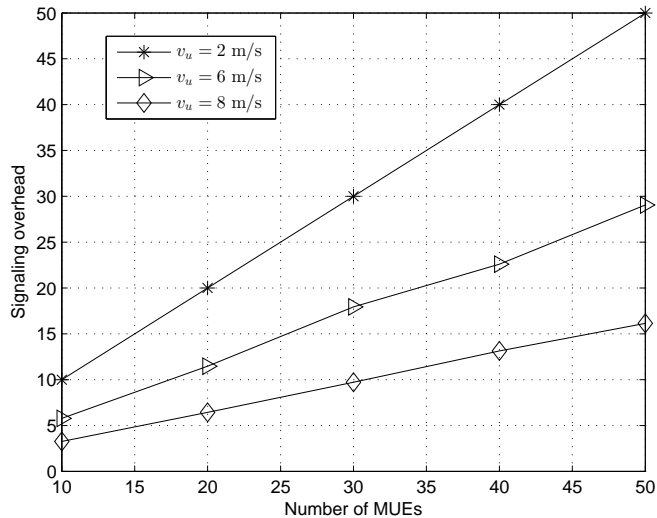}}\vspace{-.5cm}
	\caption{\small Signaling overhead vs number of MUEs.}\vspace{-0.2cm}
	\label{sim4}
\end{figure}\vspace{0em}

In Fig. \ref{sim4}, we show the signaling overhead resulting from the proposed algorithm versus the number of MUEs, for $\Omega_u=10^4$ initial cache size and different MUE speeds. Here, we refer to the signaling overhead as the number of HO requests sent to the target SBS by the MUEs. Fig. \ref{sim4} shows that for low speeds $v_u=2$ m/s, almost all MUEs will attempt to hand over to the target SBS, since the time needed for traversing the SBS coverage is longer than the time available by using the cached content. Nonetheless, the results in Fig. \ref{sim4} clearly demonstrate that the proposed algorithm has a manageable overhead, not exceeding $17$ requesting signals for a network size of $U=50$ with $v_u=8$ m/s. In fact, it is interesting to note that although mobility management is, in general, more challenging for high speed MUEs, the overhead of the proposed algorithm decreases for high speed scenarios. This is due to the fact that high speed MUEs use the cached content more effectively than slow-moving MUEs. \vspace{-1em}
\section{Conclusions}\label{conclude}\vspace{-.2cm}
In this paper, we have proposed a comprehensive framework for mobility management in integrated microwave-millimeter wave cellular networks. In particular, we have shown that by smartly caching video contents while exploiting the dual-mode nature of the network's base stations, one can provide seamless mobility to the users. We have derived various fundamental results on the probability and the achievable rate for caching video contents by leveraging millimeter wave high capacity transmissions. In addition, to capture the dynamics of the mobility management, we have formulated the multi-user handover problem as a dynamic matching game between the mobile users and small base stations. To solve this game, we have proposed a novel algorithm that is guaranteed to converge to a dynamically stable handover mechanism. Moreover, we have shown that the proposed cache-enabled mobility management framework provides significant gains in reducing the number of handovers, energy consumption for inter-frequency scanning, as well as mitigating the handover failure. Numerical results have corroborated our analytical results and showed that the significant rates for caching can be achieved over the mmW frequencies, even for fast mobile users. In addition, the results have shown that the proposed approach substantially decreases the handover failures and provides significant energy savings in heterogeneous networks.

\vspace{-.5cm}
\def\baselinestretch{0.88}
\bibliographystyle{ieeetr}
\bibliography{references}
\end{document}